%% file: acl.tex
\documentclass[11pt]{article}

\usepackage{acl}

\usepackage{times}
\usepackage{latexsym}

\usepackage[T1]{fontenc}

\usepackage[utf8]{inputenc}

\usepackage{microtype}

\usepackage{inconsolata}

\usepackage{booktabs}
\usepackage{multirow}
\usepackage{graphicx}
\usepackage[table]{xcolor}
\usepackage{subcaption}
\usepackage{amsmath, amssymb, amsthm}
\usepackage{xcolor}
\usepackage[ruled,linesnumbered]{algorithm2e}
\definecolor{sftblue}{rgb}{0.0706, 0.5882, 0.8588}
\definecolor{mygreen}{rgb}{0.0859375,0.546875,0.0859375}

\newtheorem{theorem}{Theorem}

\newcommand{\TheName}{TSPORec}
\title{\TheName{}: Token Selection via Preference Optimization for LLM-Based Sequential Recommendation}

\author{Wenqiao Zhu \thanks{\,\,\textbf{Correspondence:} \href{mailto:zhuwnq@outlook.com}{zhuwnq@outlook.com}} \\
  \And
  Chao Xu \\
  \And
  Haipang Wu \\
  \And
  Ji Liu}

\begin{document}
\maketitle
\begin{abstract}
Large Language Models (LLMs) have emerged as powerful tools for improving recommendation systems.
The effectiveness of LLMs arises from their ability to harness rich textual information and their capacity to model heterogeneous user preferences based on users’ interaction history.
However, due to the large-scale and deep architectures, LLM-based sequential recommendation approaches generally incur high inference costs, resulting in a low return on investment. 
To mitigate this cost, many existing approaches resort to using only the first few tokens of item descriptions, which inadvertently discards valuable information contained in the full text, thereby leading to suboptimal recommendation performance.  
To address this limitation, we propose a novel Token Selection approach for Preference Optimization in LLM-based sequential Recommendation, i.e., \TheName{}, which accurately pinpoints informative tokens throughout the entire textual content to improve recommendation performance. 
Specifically, we design a three-stage pipeline to select informative tokens and introduce a novel proxy reward to facilitate the implementation. 
\TheName{} not only enhances recommendation performance but also improves computational efficiency.
Extensive experiments across two models and 
datasets demonstrate the superb performance (up to 31.25\%) and efficiency (up to 63.4\%) of our approach compared with six baseline approaches. Code is avaliable at https://github.com/WNQzhu/TSPORec.git.
\end{abstract}

\section{Introduction}

Recommendation systems have become indispensable for mitigating information overload across a wide range of commercial platforms \cite{hstu,10.1145/3746252.3761507,zhu2025csdm,10.1145/3511808.3557704}. A key factor in delivering personalized item recommendations lies in accurately modeling the heterogeneous preferences of users. Recently, Large Language Models (LLMs) have demonstrated remarkable effectiveness in capturing such preferences \cite{10.1145/3637528.3671931, 10.1145/3626772.3657690, 6c48a0f1d7d84077a16ce55105dc8ddc}, owing to their vast parameter scales and strong contextual understanding. In particular, Sequential Recommendation (SR), which models diverse interests of users based on their recent interaction histories, has proven highly effective in enhancing recommendation performance when integrated with LLMs \cite{HLLM, liu2024large, HLLM-Creator}.

The goal of SR is to predict the next items a user is likely to prefer by modeling sequential patterns in their historical interactions. A key challenge in this field lies in simultaneously capturing item-level sequential dependencies and modeling the complex and heterogeneous preferences of users derived from these interaction sequences.

Owing to the pivotal role in advancing recommendation systems, SR has attracted sustained attention from both academia and industry \cite{rendle10mc, cheng13poi, HLLM, liu2024large}. Existing approaches \cite{kang2018selfattentive, 10.1145/3336191.3371786, 10.1145/3383313.3412258} predominantly leveraged item IDentification (ID) features, overlooking the abundant textual content typically present in real-world settings. Furthermore, their reliance on shallow network architectures limit their capacity to model the complex and heterogeneous nature of user preferences.

Recently, LLMs have emerged as effective solutions for overcoming the limitations of previous approaches, demonstrating strong performance in SR tasks \cite{HLLM, liu2024large, HLLM-Creator}. The core idea behind these approaches is to leverage the rich textual features and powerful representational capabilities of LLMs to extract robust representations of items and users. For example, HLLM \cite{HLLM} employs separate item and user LLMs, which are based on LLMs such as TinyLlama \cite{Zhang2024TinyLlamaAO} or Baichuan \cite{baichuan2}, to learn comprehensive item and user representations.

However, existing LLM-based SR approaches are highly resource-intensive. For instance, given an interaction history of length $S$ with each item represented by $n$ tokens, a single inference step requires processing $S \times n$ tokens, which consists of both computation-intensive and memory-intensive operations. To mitigate this cost, existing approaches \cite{HLLM, liu2024large} typically exploit only the first few tokens of each item, failing to fully exploit their rich textual features. This limitation raises an important research question: \textit{Can we identify and select the most informative tokens from textual features to maximize information utilization?} Addressing this question could benefit SR systems in two key ways. First, it enables effective usage of critical information within the text. Second, it reduces computational overhead through compressing the overall token sequence length.

In this paper, we propose a novel token selection approach for preference optimization in LLM-based Sequential Recommendation, i.e., \textbf{\TheName{}}. \TheName{} is composed of three stages to select the most informative tokens from textual features. First, we pre-train an SR model based on an LLM. Next, we freeze the LLM backbone and attach a policy head, which is then trained using a newly designed proxy reward function.  Using the trained policy, we extract informative tokens from item texts and subsequently retrain the LLM on the refined input sequences. We conduct extensive experiments across six recommendation approaches and two publicly available benchmark datasets to demonstrate the effectiveness and generality of our approach. The key contributions of this paper are summarized as follows:
\begin{itemize}
    \item We propose a new token selection pipeline that automatically identifies informative tokens from textual features. This pipeline shows strong generalization across LLMs and datasets.

    \item We design a novel proxy reward method to guide policy learning in an effective and targeted manner. This method can identify meaningful token importance, leading to an effective policy.

    \item Extensive experimental results demonstrate the superb performance (up to 29.29\% in terms of recall and 31.25\% in terms of NDCG) and efficiency (up to 63.4\% inference overhead reduction) of \TheName{} compared with six State-of-the-Art baseline approaches.
\end{itemize}

\begin{figure*}[t]
\centering
\includegraphics[width=\linewidth]{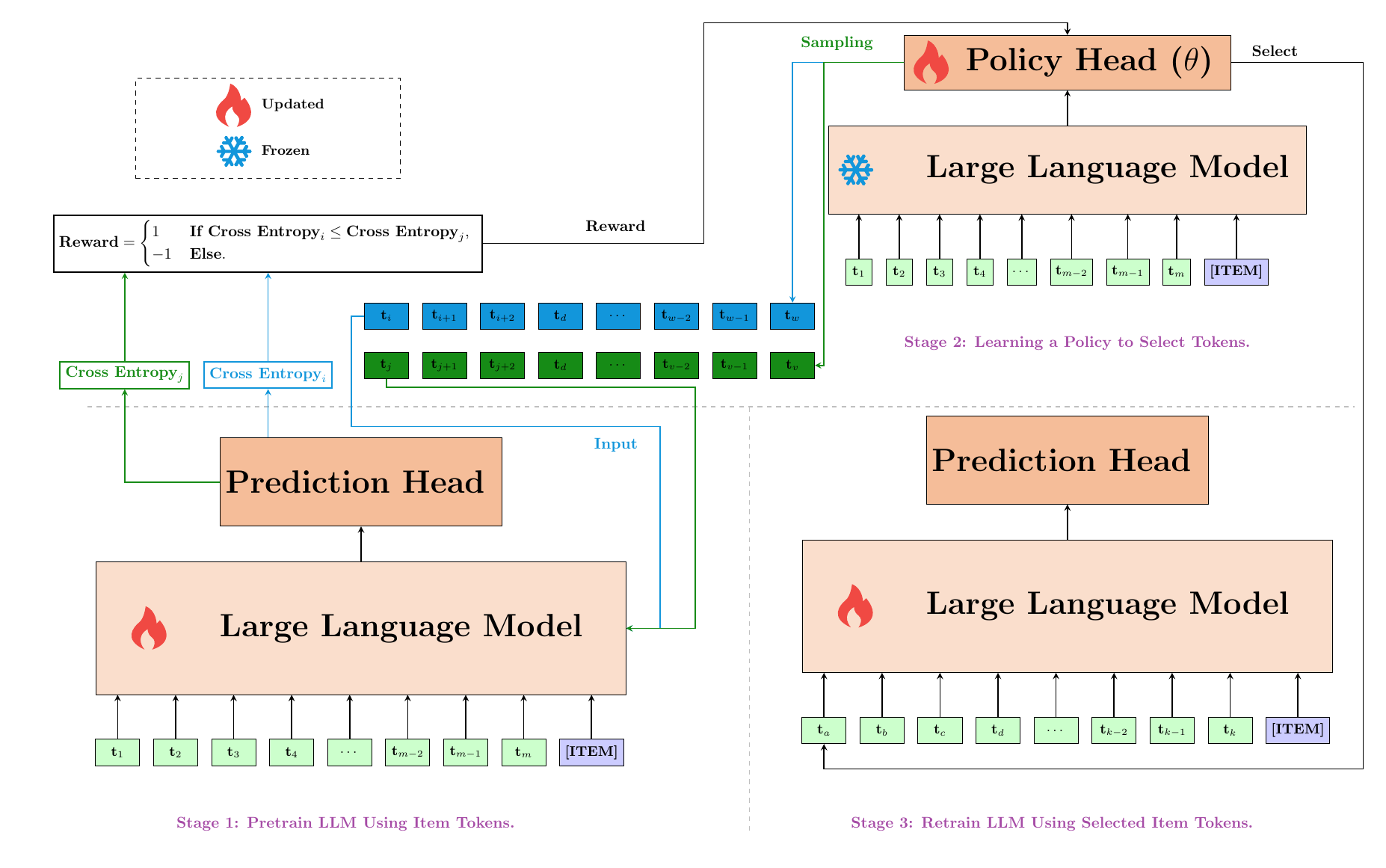}
\vspace{-10mm}
\caption{Overview of \TheName{}'s Three-Stage Training Pipeline: (1)~LLM Pretraining, (2)~Token Selection Policy Learning, (3)~Token Selection and Retraining.}
\vspace{-6mm}
\label{fig:framework}
\end{figure*}
\section{Preliminaries}
\label{sec:prelim}
Modern recommendation systems typically represent users and items as learned embedding vectors. Let $\mathbf{e}_u$ and $\mathbf{e}_i$ denote the embeddings of a user and an item, respectively. In SR, the user embedding is derived from their interaction history:
$\mathbf{e}_u = g(\{\mathbf{e}_i\}),$
where $g(\cdot)$ is commonly a shallow neural network or a LLM and $\mathbf{e}_i$ represents a positive item. A positive item $\mathbf{e}_i$ for User $u$ is an item that User $u$ has interacted with. In this work, we denote the LLM exploited to aggregate item embeddings into a user representation as the \textbf{user LLM}, and the LLM employed to generate item embeddings via token selection as the \textbf{item LLM}.

A widely adopted training objective in recommendation systems is the InfoNCE loss \cite{infonce}, as defined in Formula~(\ref{eq:infonce}):
\begin{equation}
  \label{eq:infonce}
    \scalebox{0.85}{$\displaystyle
      \mathcal{L}_{\text{InfoNCE}}(\mathbf{e}_i) = -\log \frac{\exp(\langle \mathbf{e}_u, \mathbf{e}_i \rangle)}{\exp(\langle \mathbf{e}_u, \mathbf{e}_i \rangle) + \sum_{j=1}^N \exp(\langle \mathbf{e}_u, \mathbf{e}_i^j \rangle)},
      $}
\end{equation}
where $\langle \cdot, \cdot \rangle$ denotes a similarity measure, i.e., dot product, and {\small $\{\mathbf{e}_i^1, \dots, \mathbf{e}_i^N\}$} are embeddings of $N$ negative items. Negative items are typically the items that the user has not interacted with. We denote {\small $\mathbf{E}(\mathbf{e}_i) = \big[ \langle \mathbf{e}_u, \mathbf{e}_i \rangle, \langle \mathbf{e}_u, \mathbf{e}_i^1 \rangle, \ldots, \langle \mathbf{e}_u, \mathbf{e}_i^N \rangle \big]$} the vector of similarity scores between the user embedding and the positive item embedding along with $N$ negative items. We take $\mathit{label} = [1, 0, \ldots, 0]$ as the corresponding one-hot label vector. The InfoNCE loss can then be equivalently expressed as a cross-entropy objective:
\begin{equation}
\label{eq:ce}
\mathcal{L}_{\text{InfoNCE}}(\mathbf{e}_i) = \mathcal{L}_{\mathrm{CE}}\big( \mathrm{softmax}(\mathbf{E}(\mathbf{e}_i)), \mathit{label} \big),
\end{equation}
where $\mathcal{L}_{\mathrm{CE}}$ denotes the Cross-Entropy loss.

\section{Method}
\label{sec:method}

In this section, we first formulate the problem to address in this paper. Then, we detail our approach, i.e., \TheName{}, including the new token selection pipeline and the novel proxy reward method.

\subsection{Problem Formulation}
\label{sec:problem}

We study the problem of token selection in LLM-based SR. Given an item represented by a token sequence $T = \{t_1, t_2, \dots, t_n\}$, conventional LLM-based approaches typically derive an item embedding using only the first $k$ tokens:
\[
\mathbf{e}_i = \text{LLM}(t_1, t_2, \dots, t_k),
\]
where $k \leq n$. Instead, we aim to identify an informative subset of $k$ tokens, denoted as $T_k \subseteq T$, and feed them to the LLM to compute a refined item embedding:
$
\mathbf{e}'_i = \text{LLM}(T_k).
$
Our goal is to enhance downstream recommendation performance by leveraging $\mathbf{e}'_i$, achieving superior results compared to those obtained using the original prefix-based embedding $\mathbf{e}_i$. The selection of $T_k$ is guided by a learnable policy, which prioritizes tokens that contribute most to user preference modeling. The problem addressed in this paper can be formulated as maximizing the expected reward:
\begin{equation}
\label{eq:problem}
\max_{T_k} \mathcal{R}(\theta).
\end{equation}
We defer the definition of the reward $\mathcal{R}(\theta)$ to the next section, where it is given in 
Formula~(\ref{eq:expected-reward}).
\subsection{Informative Tokens Selection}
\label{sec:pipeline}

In this section, we detail our approach, i.e., \TheName{}. To maximize flexibility, we perform informative token identification at the \textit{chunk level}. A chunk is a subset of tokens consisting of several consecutive tokens in a token sequence. Specifically, we select $\lfloor k/c \rfloor$  chunks from the token sequence $T$, where each chunk consists of $c$ consecutive tokens:
\begin{equation*}
  \scalebox{0.85}{$\displaystyle
    \left[ (t_1^1, t_1^2, \dots, t_1^c),
      \ldots,\, (t_{\lfloor k/c \rfloor}^1, t_{\lfloor k/c \rfloor}^2, \dots, t_{\lfloor k/c \rfloor}^c) \right],
    $}
\end{equation*}
where $c$ denotes the predefined chunk size, and $\lfloor \cdot \rfloor$ is the floor function, which returns the greatest integer less than or equal to its argument.  Token-level selection is a special case of this formulation, corresponding to $c = 1$.  This chunk-wise strategy allows for controllable granularity in capturing local semantic structures for downstream recommendation tasks. As shown in Figure~\ref{fig:framework}, the overall pipeline of \TheName{} consists of three stages: \textit{Pretraining}, \textit{Token Selection Policy Learning}, and \textit{Token Selection and Retraining}.

\paragraph{Pretraining}

In this stage, we pretrain both the user LLM (denoted as $g$) and the item LLM (denoted as $f$) employing the InfoNCE loss defined in Eq.~(\ref{eq:infonce}) to obtain a foundation model that can be leveraged for reward computation.

For item LLM pretraining, we append a special \textit{[ITEM]} token to the end of the textual token sequence of each item. The item embedding is then extracted as the final-layer hidden state corresponding to this \textit{[ITEM]} token. Specifically, given an input sequence $ T = \{t_1, t_2, \dots, t_k, \textit{[ITEM]}\} $, the item LLM produces hidden states:
\begin{equation}
[\mathbf{h}_1, \mathbf{h}_2, \dots, \mathbf{h}_{k+1}].
\end{equation}
We utilize $ \mathbf{h}_{k+1} $, i.e., the representation at the \textit{[ITEM]} position, as the item embedding.

\paragraph{Token Selection Policy Learning}

In this stage, we introduce our novel token selection policy. Given the full set of textual features for all items, exhaustively evaluating every possible token subset is computationally infeasible due to its exponential time complexity. To mitigate this burden, we propose a differentiable proxy reward that approximates the utility of a token subset, enabling efficient optimization of the selection policy.

We begin by quantifying the informativeness of each token $t_i$ through its relevance to the item embedding. 
Specifically, we compute a scalar importance score, referred to as the $\mathrm{info}(\cdot)$, using a query-key attention mechanism:
\begin{equation}
  \mathrm{info}(t_i) = \frac{\langle W_Q \mathbf{h}_i, W_K \mathbf{h}_{k+1} \rangle}{\sqrt{d}},
  \label{eq:logits}
\end{equation}
where $\theta = \{W_Q, W_K\}$ denotes a set of learnable parameter matrices in $\mathbb{R}^{d \times d}$, and $d$ is the dimensionality of the hidden embeddings. 
The resulting score reflects the alignment between the token representation $\mathbf{h}_i$ and the item-level representation $\mathbf{h}_{k+1}$.
We then define the probability to select Token $t_i$ as
\begin{equation}
  p_i = \frac{
    \exp{\left(\mathrm{info}(t_i)\right)}
  }{
   \sum_{j=1}^{k} \exp{\left(\mathrm{info}(t_j)\right)}
    }.
\end{equation}

Given a sequence of $M$ items that a user has interacted with, denoted as $\mathcal{I}_M = \{T_1, T_2, \dots, T_M\}$, where $T_j$ represents the token sequence of the $j$-th item. We model the joint probability of the interaction sequence as
\begin{equation}  
  P(\mathcal{I}_M) = \prod_{j=1}^M \prod_{\ell=1}^k p_{j\ell},
 \end{equation}
where $p_{j\ell}$ is the probability of the $\ell$-th token in $T_j$.

Based on these definitions, we outline the procedure for training the policy parameters \(\theta\). For each token sequence \(T_j\) in the interaction history \(\mathcal{I}_M\) of a user, we perform two independent sampling passes to extract distinct subsets of token chunks, yielding two derived sequences denoted \(T'_j\) and \(T''_j\). This process results in two new token sequence sets:
\begin{align}
  \small
  & \mathcal{I}'_M = \{T'_1, T'_2, \dots, T'_M\} \\
  & \mathcal{I}''_M = \{T''_1, T''_2, \dots, T''_M\}.
\end{align}
Using these sampled sequences, we construct two corresponding user embeddings via the pretrained, frozen LLM:
\begin{align}
  \small
  & \mathbf{e}'_u = g^{\mathrm{frozen}}\big(f^{\mathrm{frozen}}(\mathcal{I}'_M)\big) \\
  & \mathbf{e}''_u = g^{\mathrm{frozen}}\big(f^{\mathrm{frozen}}(\mathcal{I}''_M)\big).
\end{align}
We construct the full item embedding based on the full token sequence as
\begin{equation}
\mathbf{e}_i = f^{\mathrm{frozen}}(\mathcal{I}_M).
\end{equation}
We compute two cross-entropy values by contrasting the user embeddings $\mathbf{e}'_u$ and $\mathbf{e}''_u$ with the full item embedding $\mathbf{e}_i$, exploiting a set of $N$ negative item embeddings $\{\mathbf{e}_i^j\}_{j=1}^N$. This yields two cross-entropy values, $\mathcal{L}_{\mathrm{ce}}^1$ and $\mathcal{L}_{\mathrm{ce}}^2$:
\begin{align}
  \small
  & \scalebox{0.85}{$\displaystyle
      \mathcal{L}_{\mathrm{ce}}^1 = -\log \frac{\exp(\langle \mathbf{e}'_u, \mathbf{e}_i \rangle)}{\exp(\langle \mathbf{e}'_u, \mathbf{e}_i \rangle) + \sum_{j=1}^N \exp(\langle \mathbf{e}'_u, \mathbf{e}_i^j \rangle)}
      $} \label{eq:ce-first} \\
  & \scalebox{0.85}{$\displaystyle
      \mathcal{L}_{\mathrm{ce}}^2 = -\log \frac{\exp(\langle \mathbf{e}''_u, \mathbf{e}_i \rangle)}{\exp(\langle \mathbf{e}''_u, \mathbf{e}_i \rangle) + \sum_{j=1}^N \exp(\langle \mathbf{e}''_u, \mathbf{e}_i^j \rangle)}
      $}. \label{eq:ce-second}
\end{align}

The reward signal $r$ is then defined as:
\begin{equation}
  r = 
  \begin{cases}
    1 & \text{if } \mathcal{L}_{\mathrm{ce}}^1 \leq \mathcal{L}_{\mathrm{ce}}^2, \\
    -1 & \text{otherwise}.
  \end{cases}
  \label{eq:reward-def}
\end{equation}

Our objective is to optimize the policy parameters $\theta$ by maximizing the expected reward:
\begin{equation}
  \scalebox{0.85}{$\displaystyle
    \mathcal{R}(\theta) = \mathbb{E}\!\left[ r \cdot \Big( \log P(\mathcal{I}'_M \mid \theta) - \log P(\mathcal{I}''_M \mid \theta) \Big) \right]
    $}
  \label{eq:expected-reward}
\end{equation}

Given the above description of \TheName{}, we establish the following theorem:

\begin{theorem}
\label{prop:core}
Under the framework of \TheName{}, the following properties hold:
\begin{itemize}
  \item Smaller values of the cross-entropy losses $\mathcal{L}_{\mathrm{ce}}^1$ and $\mathcal{L}_{\mathrm{ce}}^2$ in Eq.~\eqref{eq:ce-first} and Eq.~\eqref{eq:ce-second} yield a tighter approximation to the ground-truth preference distribution in terms of KL divergence.
  
  \item Token chunks shared between the two sampled sequences $\mathcal{I}'_M$ and $\mathcal{I}''_M$ do not contribute to the gradient updates of the policy parameters $\theta$.
  
  \item The objective in Eq.~\eqref{eq:expected-reward} increases the likelihood of selecting informative token chunks while suppressing less informative ones.
\end{itemize}
\end{theorem}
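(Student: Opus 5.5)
We treat the three items of Theorem~\ref{prop:core} separately, since each rests on a different elementary identity.

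\emph{First item.} We use Eq.~\eqref{eq:ce}. Let $q = \mathrm{softmax}(\mathbf{E})$ be the model's predicted distribution over the candidate set (positive plus $N$ negatives), computed from $\mathbf{e}'_u$ or $\mathbf{e}''_u$. Let $p^\ast$ be the ground-truth preference distribution, i.e.\ the one-hot $\mathit{label}$ or, more generally, the empirical preference distribution. The key step is the decomposition
\[
\mathcal{L}_{\mathrm{CE}}(q, p^\ast) = H(p^\ast) + \mathrm{KL}(p^\ast \,\|\, q).
\]
The entropy $H(p^\ast)$ does not depend on the selected tokens; it is zero for a one-hot label. Hence, for a fixed positive item and a fixed set of negatives, $\mathcal{L}_{\mathrm{ce}}^1 \le \mathcal{L}_{\mathrm{ce}}^2$ is equivalent to $\mathrm{KL}(p^\ast\|q') \le \mathrm{KL}(p^\ast\|q'')$. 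The reward in Eq.~\eqref{eq:reward-def} therefore selects the sample whose induced preference distribution is closer to the ground truth. I will state explicitly that both losses share the same $\mathbf{e}_i$ and negatives, so that the comparison is well defined.

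\emph{Second item.} Write $\log P(\mathcal{I}'_M\mid\theta) = \sum_j \sum_{\ell\in S'_j} \log p_{j\ell}(\theta)$, where $S'_j$ is the index set of the tokens in the chunks selected for $T'_j$. I read the product in the definition of $P(\mathcal{I}_M)$ as running over the selected tokens; this interpretation has to be made explicit, because over all $k$ tokens the difference would be identically zero. Define $S''_j$ analogously. Then
\[
\log P(\mathcal{I}'_M) - \log P(\mathcal{I}''_M) = \sum_j \Big( \sum_{\ell \in S'_j\setminus S''_j} \log p_{j\ell} - \sum_{\ell\in S''_j\setminus S'_j} \log p_{j\ell} \Big),
\]
because the terms indexed by $S'_j\cap S''_j$ cancel exactly. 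Note that $r$ depends on $\theta$ only through the sampling, which is treated as a fixed sample in the score-function estimator, and the LLMs are frozen. Hence $\nabla_\theta$ of the integrand of Eq.~\eqref{eq:expected-reward} involves only the symmetric-difference chunks. The shared chunks still influence $p_{j\ell}$ through the softmax normalizer, but they contribute no explicit term. I will phrase the claim as ``no direct contribution'' to avoid overclaiming.

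\emph{Third item.} Using the reduced expression above, gradient ascent with $r=1$ moves $\theta$ in the direction
\[
\sum_{\ell\in S'_j\setminus S''_j}\nabla\log p_{j\ell} - \sum_{\ell\in S''_j\setminus S'_j}\nabla\log p_{j\ell}.
\]
This raises the selection probability of the chunks that are unique to the lower-loss (better) sample and lowers that of the chunks unique to the worse sample. When $r=-1$ the roles are swapped, so in both cases the winning sample's distinctive chunks are reinforced. By the first item, ``winning'' means producing a smaller KL divergence to the ground truth, which is our operational definition of informative. To make this rigorous, I will compute the softmax gradient $\nabla_{\mathrm{info}}\log p_\ell = e_\ell - p$. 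This shows that increasing $\log p_\ell$ for one set and decreasing it for another shifts probability mass accordingly, to first order for a small step.

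\emph{Main obstacle.} The main obstacle is the third item. ``Informative'' is not formally defined, so the argument can only establish that, in expectation, probability mass moves toward chunks appearing more often in lower-loss samples. I would formalize this by defining the informativeness of a chunk as the expected reward conditioned on its inclusion, and then showing that its expected logit gradient has the same sign as this quantity.
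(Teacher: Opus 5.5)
Your proposal is correct and follows essentially the same route as the paper's proof: cross-entropy $=$ entropy $+$ KL for the first item, cancellation of the shared-chunk factors in $\log P(\mathcal{I}'_M\mid\theta)-\log P(\mathcal{I}''_M\mid\theta)$ for the second, and a case analysis on the sign of $r$ for the third; you are also more explicit than the paper that $P$ must be a product over the selected tokens and that the samples are held fixed when differentiating. One small sharpening: both samples contain the same number of tokens per item ($|S'_j|=|S''_j|$), and the softmax normalizer enters each $\log p_{j\ell}$ as $-\log Z_j$ with $Z_j=\sum_m \exp(\mathrm{info}(t_{jm}))$, so these terms cancel exactly in the difference, shared chunks contribute nothing at all to the gradient, and your ``no direct contribution'' hedge is unnecessary.
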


\begin{proof}
We defer the detailed proof to Appendix~\ref{app:proof}.
\end{proof}

\begin{table*}[t]
\small
\begin{tabular}{cc cccc cccc r}
\toprule
Dataset       & Method &  R@5 & R@10 & R@50 & N@5 & N@10 & N@50 &  Impr. (avg)\\
\midrule
\multirow{8}{*}{Amazon Books} & SASRec  & 3.38 & 5.09 & 11.59 & 
2.24 & 2.79 & 4.20  &  +0.0 \%\\
& HSTU        & 2.88 & 4.51 & 11.06 & 1.89   & 2.41 & 3.83 & -11.46\%\\
& LLMinit& 3.29&5.05&11.74&2.18&2.74&4.19& -1.14\%\\
& HLLM(random) &3.29&5.17&12.96&2.17&2.77&4.45& +2.14 \%\\
& HLLM(topk logits) &3.39&5.34&13.35&2.24&2.86&4.59&  +5.36\%\\
& HLLM(first-$k$)   & \underline{4.16} & \underline{6.29}& \underline{14.49}& 
\underline{2.80}&\underline{3.48} & \underline{5.25}&  +24.40\%\\
& \TheName{} (Ours) &  \textbf{4.37} & \textbf{6.55} & \textbf{14.98} &
\textbf{2.94} & \textbf{3.64} & \textbf{5.47} &  +\textbf{29.86}\%\\
& Best Impr. &\cellcolor{lightgray}29.29\%&\cellcolor{lightgray}28.68\%&\cellcolor{lightgray}29.25\%& \cellcolor{lightgray}31.25\%&\cellcolor{lightgray}30.47\%&\cellcolor{lightgray}30.24\%&  \cellcolor{lightgray}N/A\\
\midrule
\multirow{8}{*}{Pixel} & SASRec     & 2.41 & 3.77 & 9.63 &
             1.59 & 2.03 & 3.29 &  +0.0\%\\
& HSTU   & 2.14 & 3.41 & 8.85 &  
 1.40 & 1.81 & 2.97 &  -10.22\%\\
& LLMinit  &2.65&4.10&10.53&1.73&2.19&3.58& +8.92\% \\
& HLLM(random)  &\underline{2.82}&4.39&11.07&\underline{1.84}&2.34&3.78& +15.71\% \\
& HLLM(topk logits) &2.79&4.36&11.02&1.82&2.33&3.76& +14.89\%\\
& HLLM(first-$k$)      & 2.80 & \underline{4.41} & \underline{11.15} 
                      & 1.83 & \underline{2.35} & \underline{3.80}  &  +15.88\% \\
& \TheName{} (Ours)  & \textbf{2.88}   &  \textbf{4.51}  & \textbf{11.20}  & 
\textbf{1.88} &  \textbf{2.40}     &  \textbf{3.85}  & \textbf{+18.15\%}  \\
& Best Impr. &\cellcolor{lightgray}19.50\%&\cellcolor{lightgray}19.63\%&\cellcolor{lightgray}16.30\%&\cellcolor{lightgray}18.24\%&\cellcolor{lightgray}18.23\%&\cellcolor{lightgray}17.02\%& \cellcolor{lightgray}N/A\\
\bottomrule
\end{tabular}
\vspace{-2mm}
\caption{Performance comparison of various methods on the Amazon Books and Pixel datasets, using Qwen3-Embedding-0.6B as the backbone model, with text sequences truncated to 64 tokens.}  
\vspace{-6mm}
\label{tbl:avg_acc}
\end{table*}

\paragraph{Token Selection and Retraining}  
After optimizing the policy parameters, we proceed to select informative token chunks as follows. We define the probability of a chunk $c$ as  
$
P(c) = \prod_{i=1}^{|c|} p_i,
$
where $p_i$ denotes the generation probability of the $i$-th token in the chunk, and $|c|$ is the number of tokens in $c$. Chunks with higher probability scores according to $P(c)$ are selected. Using this set of high-probability chunks, we construct a refined dataset and retrain the model on this updated representation to improve both its performance and efficiency.

\section{Experiments}
In this section, we first describe the experimental setup, followed by a series of comprehensive experiments designed to evaluate the effectiveness and the efficiency of \TheName{}. 
Finally, we present a case study that illustrates the specific tokens selected by \TheName{}, providing insights into its operational mechanism.

\subsection{Experimental Setup}

We conduct a comprehensive evaluation of \TheName{} against four state-of-the-art baselines: 
(1) traditional ID-based sequential recommendation models, such as SASRec~\cite{kang2018selfattentive} and HSTU~\cite{hstu}; 
(2) semantic initialization methods that leverage large language model (LLM)-derived embeddings to initialize item ID representations, exemplified by LLMinit~\cite{Harte2023LeveragingLL}; 
and (3) hierarchical LLM-based frameworks, specifically HLLM~\cite{HLLM}. 
In addition, to study the impact of token selection, we adapt these baselines with different token selection policies, resulting in six baseline variants for comparison. We evaluate our approach on two publicly available datasets: the Amazon Book Reviews dataset~\cite{10.1145/2766462.2767755} and the Pixel dataset~\cite{cheng2023image}.  For backbone LLMs, we employ Qwen3-Embedding-0.6B~\cite{qwen3embedding} and TinyLlama-1.1B~\cite{Zhang2024TinyLlamaAO}. Performance is evaluated using Recall@K (R@K) and NDCG@K (N@K). Additional details on the experimental setup are given in Appendix~\ref{app:sec:more-setup}.

\subsection{Experimental Results}
\paragraph{Overall Performance}
As shown in Table~\ref{tbl:avg_acc}, \TheName{} significantly outperforms all baseline approaches by a substantial margin. In this setting, input sequences are truncated to 64 tokens, and the Qwen3-Embedding-0.6B model serves as the frozen LLM backbone. For \TheName{}, the \textit{chunk size} is set to 8.
On the Amazon Books dataset, \TheName{} improves upon SASRec by up to 29.29\% in Recall@K and 31.25\% in NDCG@K, yielding an average gain of 29.43\% across metrics. On the Pixel dataset, it achieves improvements of up to 19.63\% in Recall@K and 18.24\% in NDCG@K, with an average increase of 16.79\%.

\paragraph{The Importance of Tokens in LLM-based Recommendation}
Given the significant advances achieved by LLM-based sequential recommendation methods, we investigate the impact of different types of input tokens. Specifically, we compare \textit{random tokens}, \textit{top-$k$ logit tokens}, \textit{the first $k$ tokens}, and \TheName{}. For \textit{random tokens}, we randomly select $k$ tokens from the item text. For \textit{top-$k$ logit tokens}, we select the $k$ tokens with the highest logits. \textit{The first $k$ tokens} is the default strategy employed by many existing methods.
In contrast, \TheName{} selects $k$ tokens using a learned policy.

\begin{figure*}[t]
  \centering
    \begin{subfigure}[b]{0.24\textwidth}
        \centering
        \includegraphics[width=\textwidth]{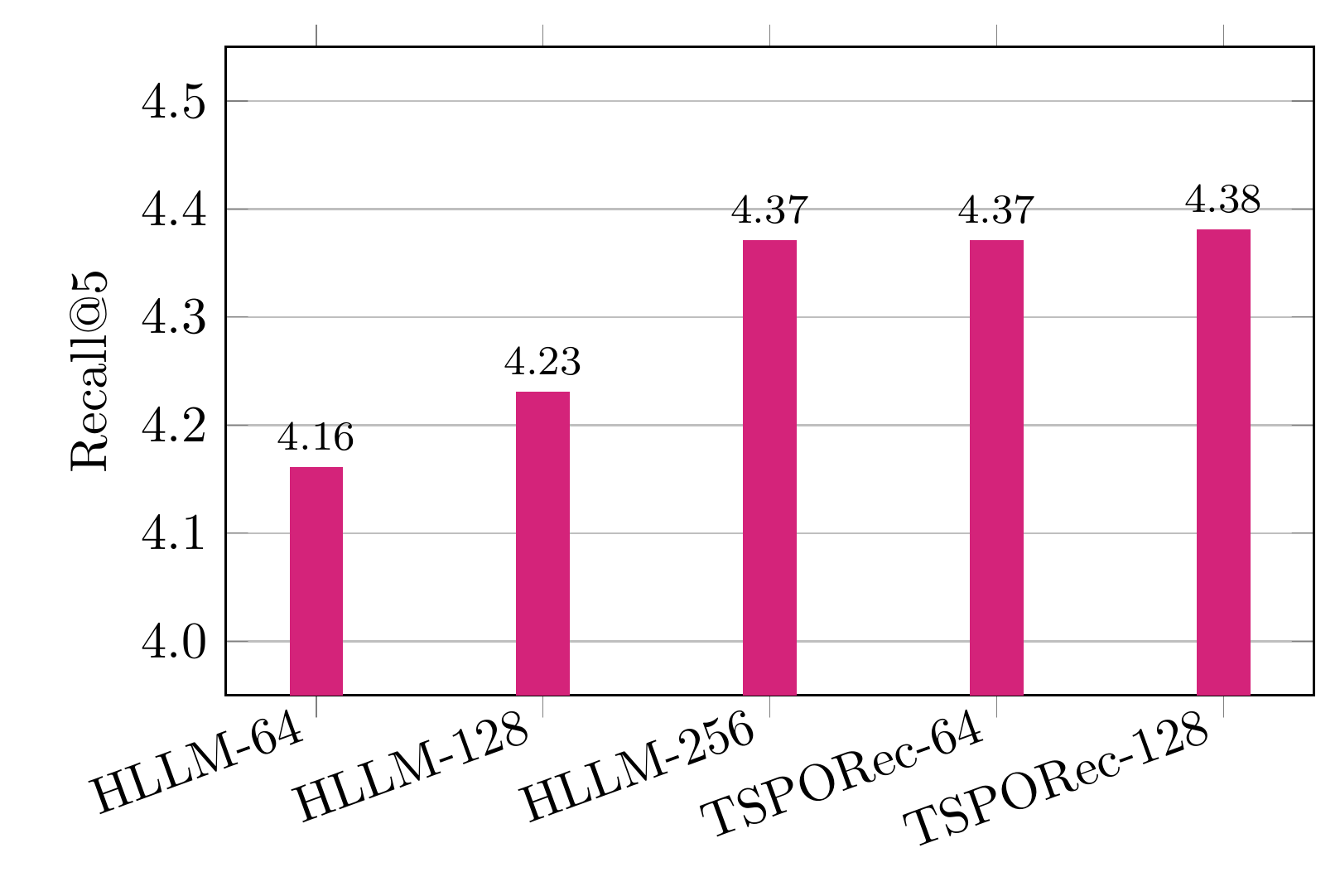} 
        \caption{Amazon Book, Recall@5}
        \label{fig:sub1}
    \end{subfigure}
    \hfill 
    \begin{subfigure}[b]{0.24\textwidth}
        \centering
        \includegraphics[width=\textwidth]{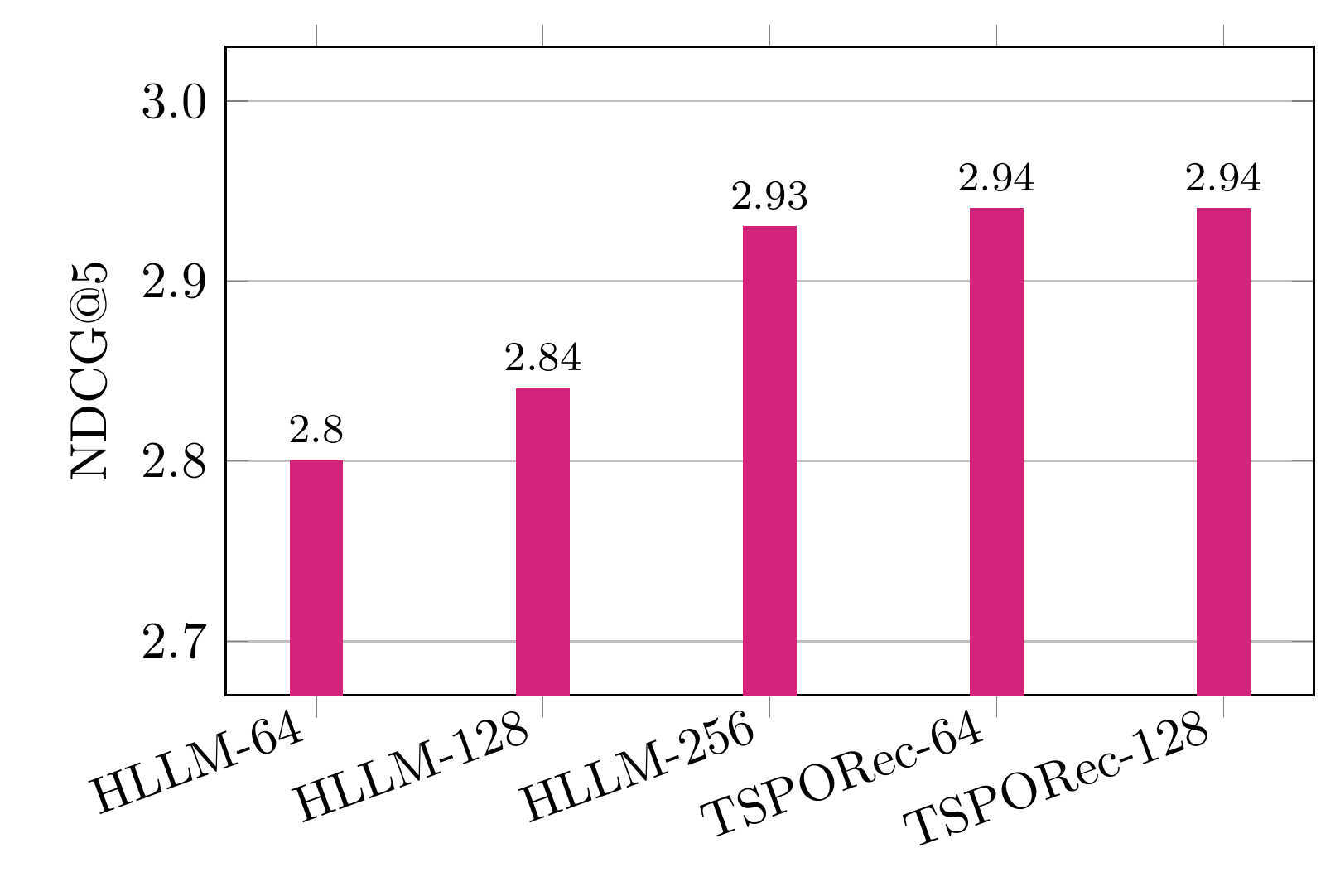} 
        \caption{Amazon Book, NDCG@5}
        \label{fig:sub1}
    \end{subfigure}
    \hfill 
    \begin{subfigure}[b]{0.24\textwidth}
        \centering
        \includegraphics[width=\textwidth]{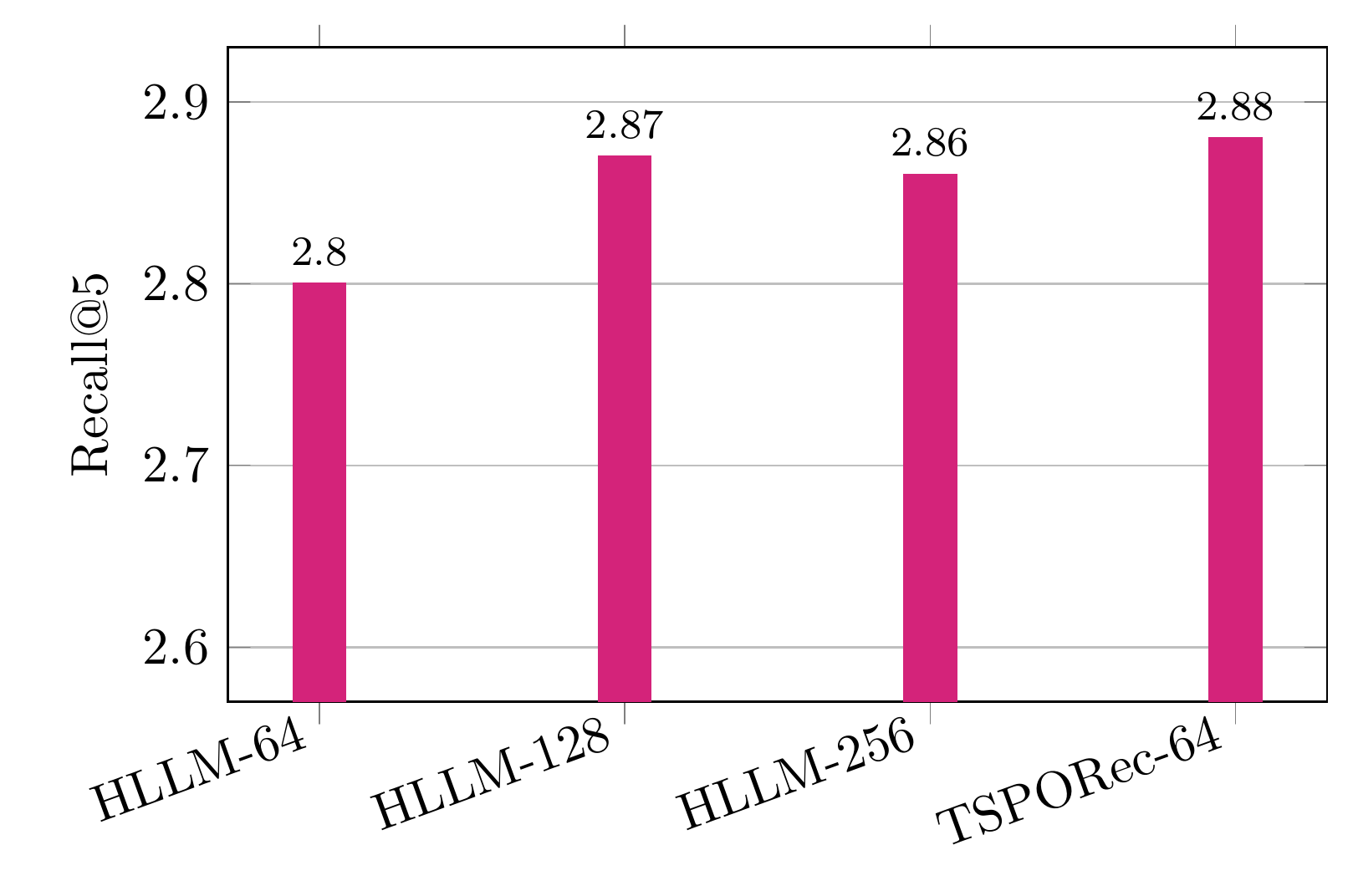} 
        \caption{Pixel, Recall@5}
        \label{fig:sub1}
    \end{subfigure}
    \hfill 
    \begin{subfigure}[b]{0.24\textwidth}
        \centering
        \includegraphics[width=\textwidth]{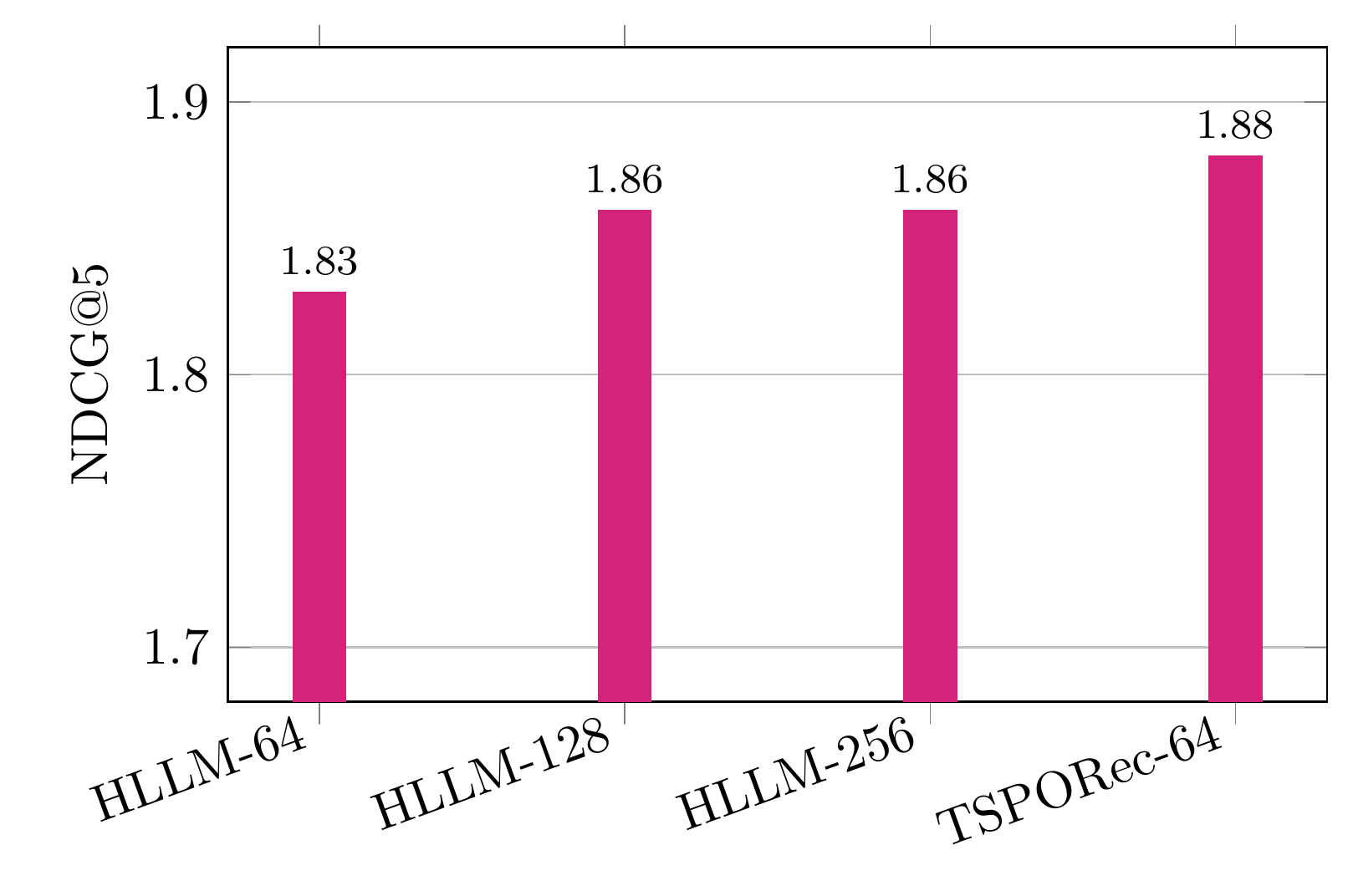} 
        \caption{Pixel, NDCG@5}
        \label{fig:sub1}
    \end{subfigure}
    \begin{subfigure}[b]{0.24\textwidth}
        \centering
        \includegraphics[width=\textwidth]{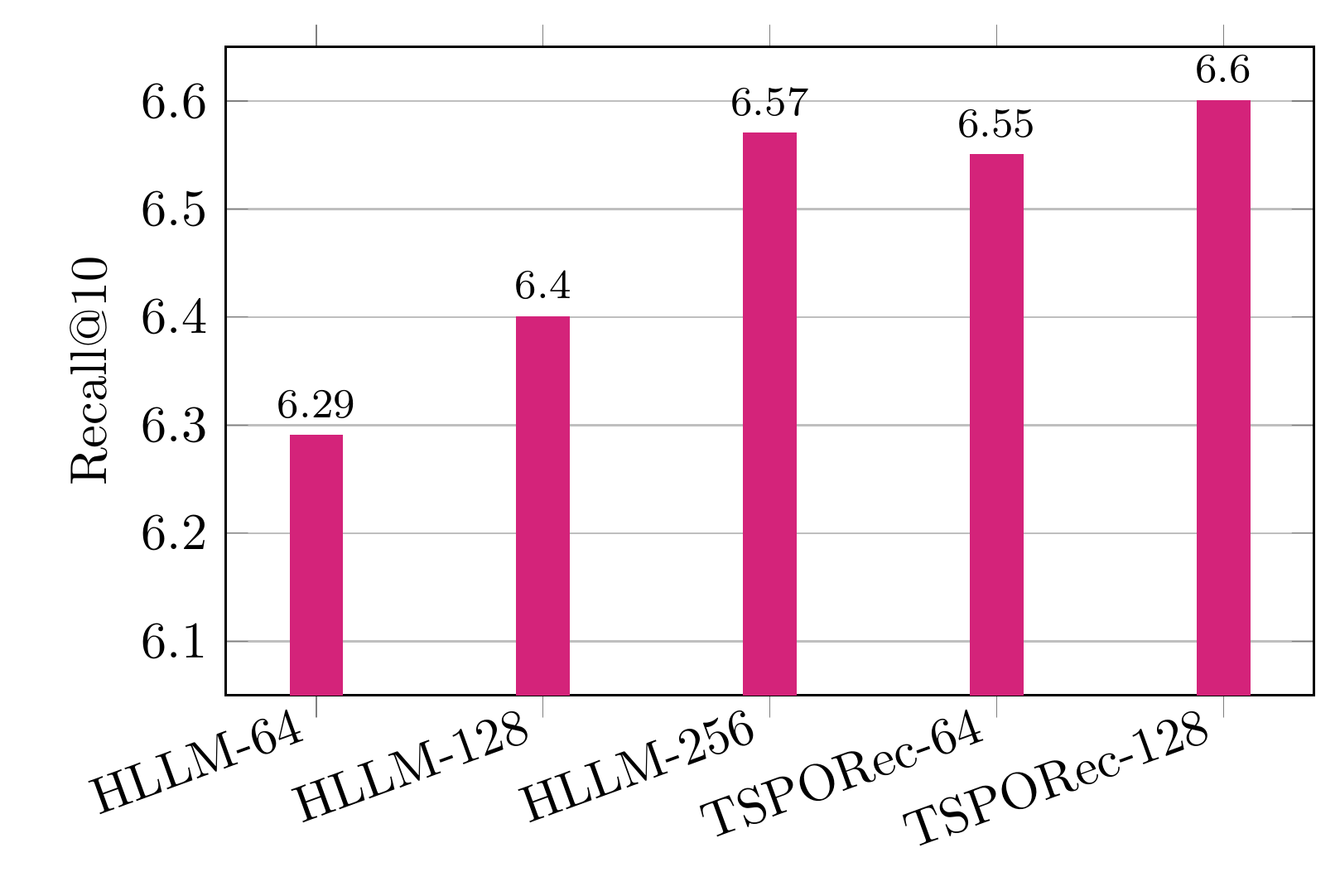} 
        \caption{Amazon Book, Recall@10}
        \label{fig:sub1}
    \end{subfigure}
    \hfill 
    \begin{subfigure}[b]{0.24\textwidth}
        \centering
        \includegraphics[width=\textwidth]{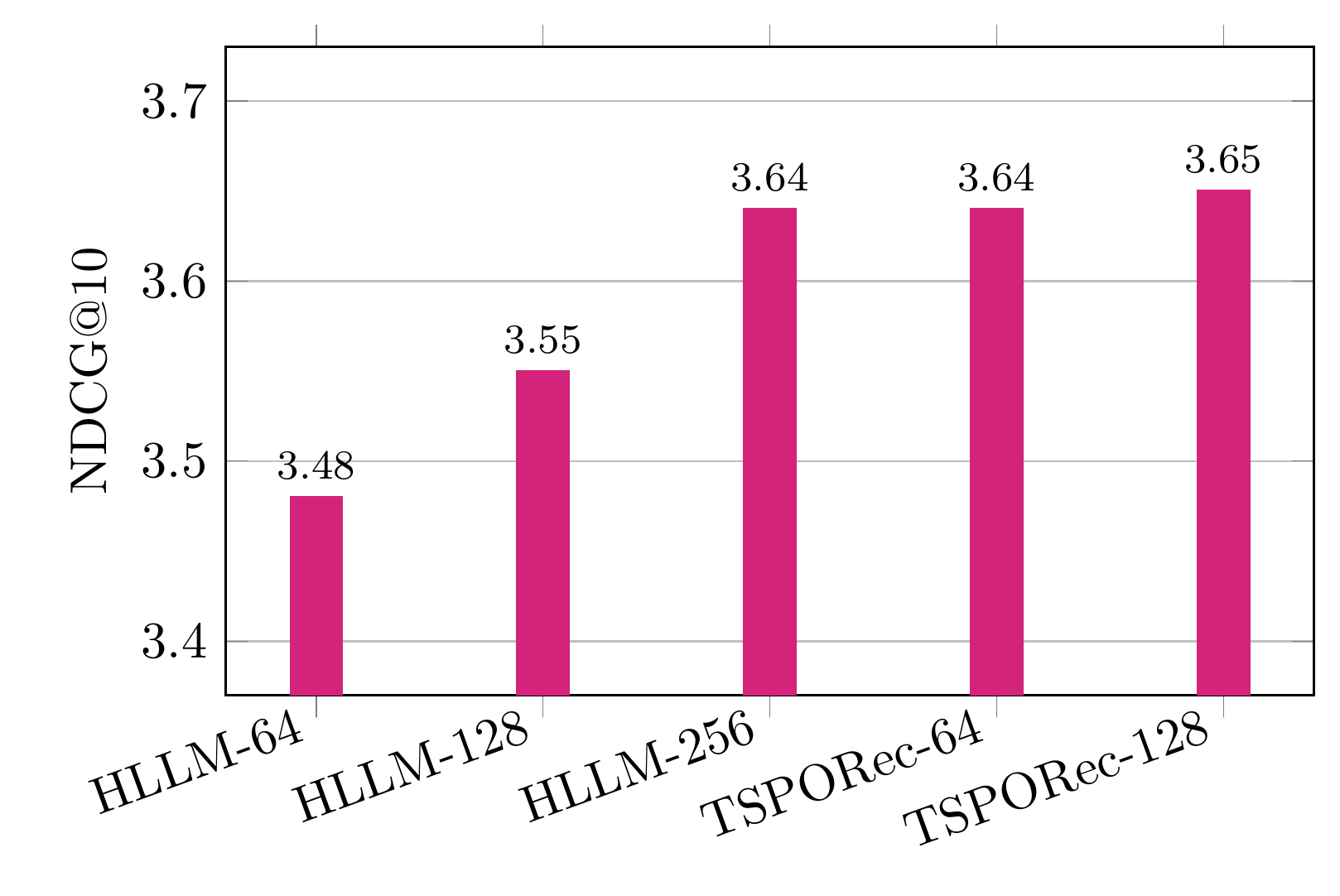} 
        \caption{Amazon Book, NDCG@10}
        \label{fig:sub1}
    \end{subfigure}
    \hfill 
    \begin{subfigure}[b]{0.24\textwidth}
        \centering
        \includegraphics[width=\textwidth]{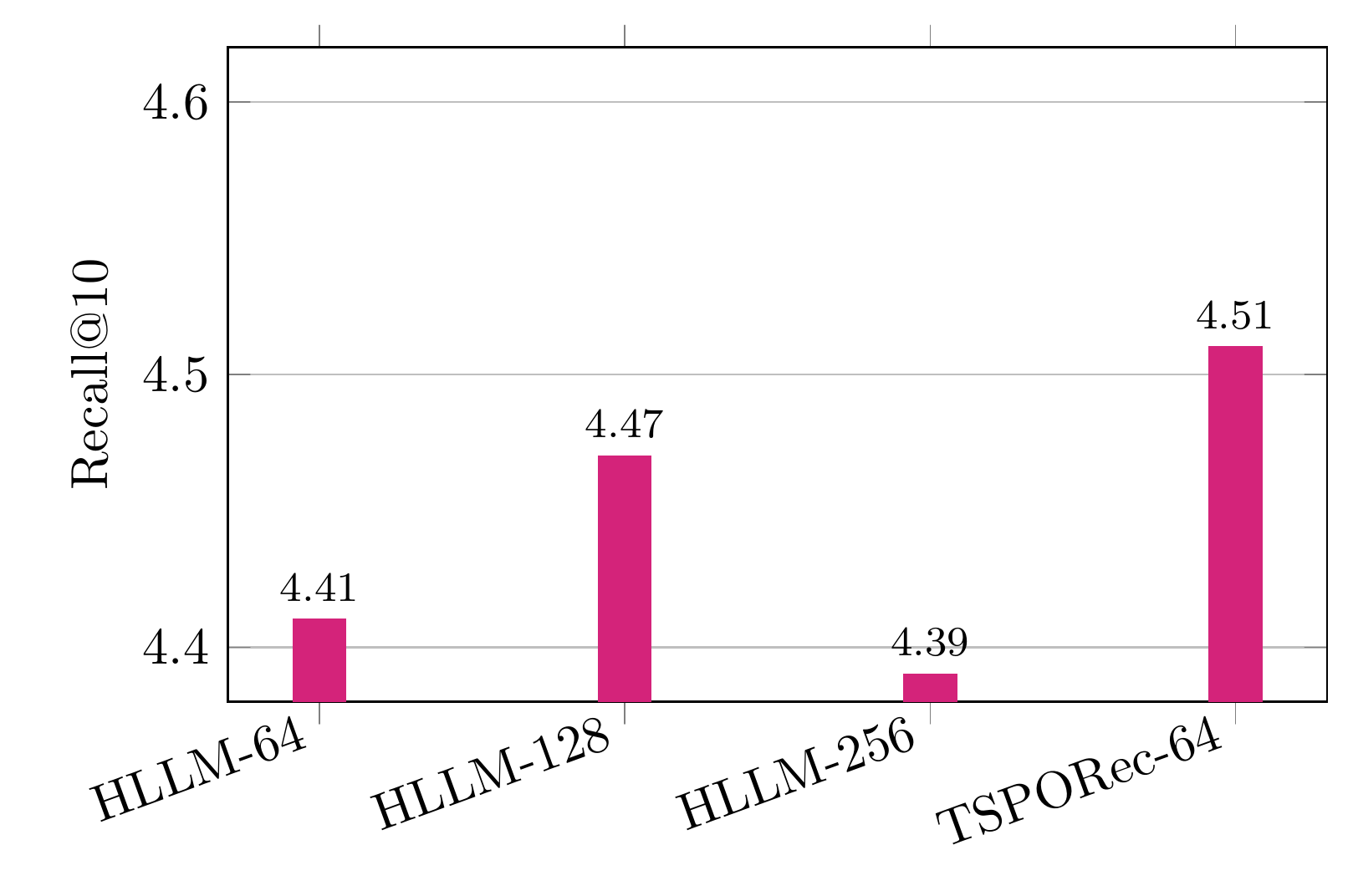} 
        \caption{Pixel, Recall@10}
        \label{fig:sub1}
    \end{subfigure}
    \hfill 
    \begin{subfigure}[b]{0.24\textwidth}
        \centering
        \includegraphics[width=\textwidth]{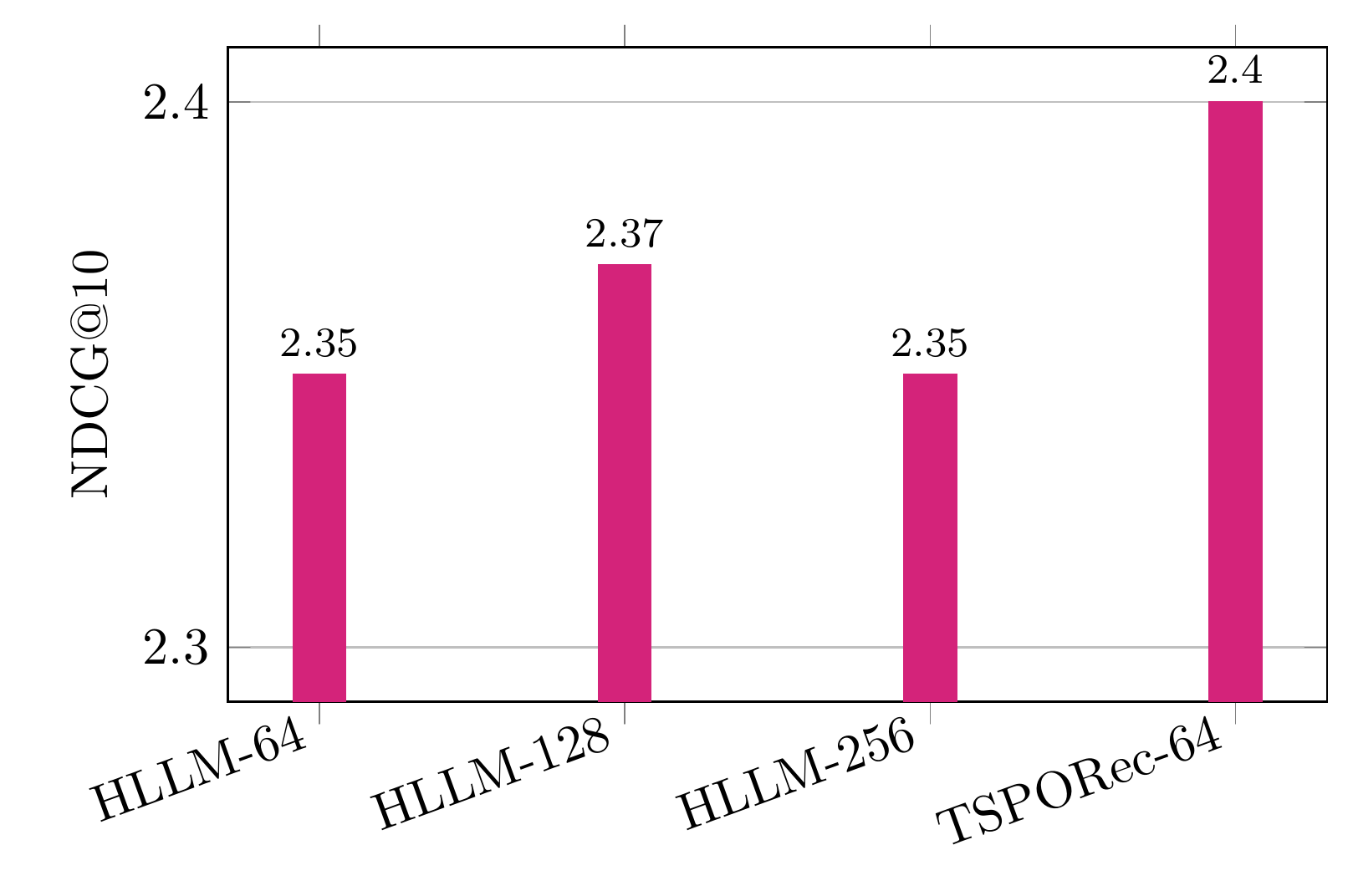} 
        \caption{Pixel, NDCG@10}
        \label{fig:sub1}
    \end{subfigure}   
\vspace{-2mm}
\caption{Performance of HLLM and \TheName{} on Amazon Books and Pixel with different token sequence lengths, using \texttt{Qwen3-Embedding-0.6B} as the backbone model.}   
\vspace{-6mm}
\label{fig:vary_len_cmp}
\end{figure*}

As shown in Table~\ref{tbl:avg_acc}, \TheName{} achieves the most significant performance gain by leveraging the learned token selection policy, substantially outperforming existing token selection strategies. On the Amazon Books dataset, randomly selecting tokens reduces the performance improvement of HLLM from 24.40\% to just 2.14\%, highlighting the critical role of input token selection in LLM-based recommendation.

These results demonstrate that the choice of input tokens has a profound impact on model effectiveness—a factor largely overlooked by prior approaches. 
Surprisingly, the \textit{top-$k$ logit tokens} do not outperform the simple \textit{first-$k$ tokens} baseline, suggesting that more principled token selection mechanisms are needed. 
This limitation may stem from the unidirectional nature of standard LLM attention, which prevents hidden states from effectively capturing meaningful collaborative signals across the sequence. In contrast, as illustrated in Eq.~\eqref{eq:logits}, \TheName{} leverages a bidirectional scoring mechanism. The importance score of each token is learned from both (i) the standard attention mechanism, which encodes semantic meaning, and (ii) the pretrained item embedding $\mathbf{h}_{k+1}$, which provides a collaborative filtering signal. 
This dual-source design enables \TheName{} to jointly capture rich semantic information and sequential collaborative patterns, thereby facilitating the identification of the most informative tokens and ultimately improving model performance.

\paragraph{The Impact of Token Sequence Length in LLM-based Recommendation}

In this section, we further investigate how the length of input tokens impacts recommendation performance. To this end, we conduct extensive experiments with varying input token lengths, as shown in Figure~\ref{fig:vary_len_cmp}. We denote each configuration as \textit{Method}$-k$, where $k$ represents the number of input tokens used. For instance, HLLM-$128$ indicates the use of the first 128 tokens from the input sequence, while \TheName{}-$64$ denotes the selection of 64 tokens according to the learned token selection policy.

As illustrated in Figure~\ref{fig:vary_len_cmp}, on the Amazon Books dataset, increasing the token length leads to performance improvements for both HLLM and \TheName{}, suggesting that longer input sequences may contain additional informative tokens that benefit recommendation quality. In contrast, on the Pixel dataset, while moderate increases in token length initially enhance HLLM's performance, further lengthening the input sequence eventually degrades performance. This indicates that not all tokens contribute positively to recommendations, and including irrelevant or noisy tokens can be detrimental. This phenomenon further underscores the necessity of effective token selection to improve recommendation performance. Moreover, on both the Amazon Books and Pixel datasets, \TheName{} with only 64 selected tokens achieves performance on par with or superior to the corresponding HLLM variants using 256 tokens. These results demonstrate that \TheName{} not only enhances recommendation accuracy but also improves computational efficiency by enabling effective performance with significantly fewer tokens.

\paragraph{The Efficiency of \TheName{}}

To evaluate the impact on inference efficiency of \TheName{}, we analyze the computational cost across varying input sequence lengths. 
By selecting informative tokens, \TheName{} effectively reduces the required input length without sacrificing performance, thereby lowering overall computational overhead.

We measure the inference time of LLM on an NVIDIA H100 GPU using the Amazon Books dataset, with a batch size of 32 and an item sequence length of 10. 
The results are summarized in Table~\ref{tbl:infer_time_cost_hllm}. 
We observe that the item LLM dominates the inference time, and its computational share grows significantly as the input token sequence length increases. 
Since \TheName{} matches the performance of HLLM using only 64 tokens with HLLM using 256 tokens, it reduces inference cost by \textbf{63.4\%} in scenarios where item embeddings are precomputed offline, and by \textbf{61.3\%} when both the item and user LLMs are served online.

\begin{table}[t]
\centering
\small
\begin{tabular}{c cc cc c}
\toprule
Token$_S$ & I$_t$ (ms)  & U$_t$ (ms)  & Total (ms)  \\
\midrule
64  & 299  & 27  & 326\\
128 & 509  & 27  & 536\\
256 & 817  & 26  & 843\\
\bottomrule
\end{tabular}
\vspace{-2mm}
\caption{Inference time cost during evaluation. 
$\text{Token}_S$ denotes the length of the input token sequence, $I_t$ the inference time of the item LLM, and $U_t$ the inference time of the user LLM, using Qwen3-Embedding-0.6B.}
\vspace{-6mm}
\label{tbl:infer_time_cost_hllm}
\end{table}

\paragraph{The Generalization capability of \TheName{}}

To evaluate the generalization ability of \TheName{}, we conduct experiments using a different LLM backbone---\texttt{TinyLlama-1.1B}, as well as a different downstream SR model, LLMinit~\cite{Harte2023LeveragingLL}. We compare \TheName{} against the HLLM variant with the same LLM backbone, and the results are presented in Table~\ref{tbl:tiny_llama_amz}. \TheName{} consistently outperforms HLLM across all Recall and NDCG metrics, achieving an average improvement of 4.12\%. Furthermore, as shown in Table~\ref{tbl:llminit_toks}, when \TheName{}-selected tokens are used to initialize LLMinit, the model performance improves by 3.76\%, demonstrating that the selected tokens carry informative signals for recommendation. These results confirm that \TheName{} not only enhances downstream performance but also exhibits strong generalization across different LLM architectures and recommendation models.

\subsection{Case Study}
\begin{figure*}[t]
\centering
\includegraphics[width=\linewidth]{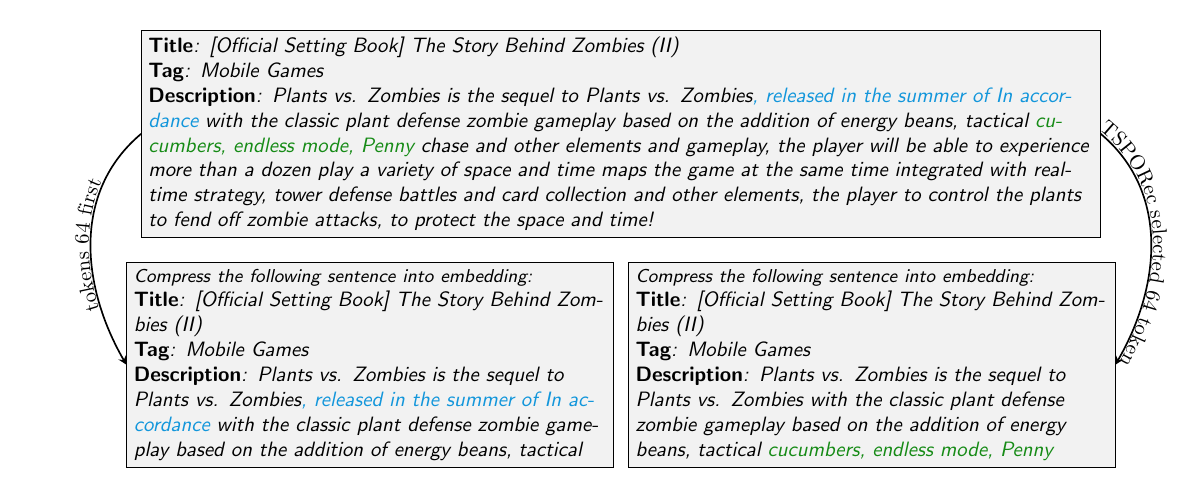}
\vspace{-8mm}
\caption{Case study on token selection for item ``i9019'' in the Pixel dataset, with 64 tokens selected by each method.}
\vspace{-4mm}
\label{fig:case_1}
\end{figure*}

\begin{figure*}[t]
\centering
\includegraphics[width=\linewidth]{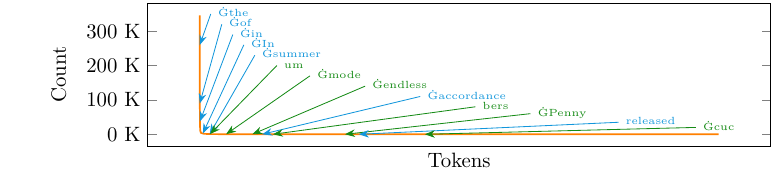}
\vspace{-8mm}
\caption{Comparison of token distributions selected by first-$k$ and \TheName{} ($\dot{G}$ is a leading space token).}
\vspace{-4mm}
\label{fig:case_1_tok_dist}
\end{figure*}

In this section, we conduct a case study to examine the underlying mechanisms of token selection. An illustrative example is shown in Figure~\ref{fig:case_1}, with additional cases provided in Appendix~\ref{app:sec:case2} for further analysis.
We compare the tokens selected by the first-$k$ strategy and \TheName{} on the Pixel dataset. As depicted in Figure~\ref{fig:case_1}, notable differences exist between the two strategies, highlighted in green (tokens selected by \TheName{} but not by first-$k$) and blue (tokens selected by first-$k$ but not by \TheName{}). 

\begin{table}[t]
\small
\begin{tabular}{c cccc r}
\toprule
Method &  R@5 & R@10 & N@5 & N@10  & Impr. \\
\midrule
HLLM  & 3.99 & 6.07& 2.66&  3.33 & +0.0\%\\
\TheName{} & \textbf{4.16}& \textbf{6.30}& \textbf{2.78}& \textbf{3.46}& \textbf{+4.12\%}\\
\bottomrule
\end{tabular}
\vspace{-2mm}
\caption{Performance comparison of HLLM and \TheName{} on the Amazon Books dataset, using TinyLlama-1.1B as the backbone model, with text sequences truncated to 64 tokens.}  
\vspace{-4mm}
\label{tbl:tiny_llama_amz}
\end{table}

\begin{table}[t]
\small
\centering
\begin{tabular}{c cccc r}
\toprule
Method &  R@5 & R@10 & Impr. \\
\midrule
LLMinit(first-$k$) & 3.29 & 5.05 & + 0.0\%\\
LLMinit(\TheName{}) & \textbf{3.42} & \textbf{5.23} & \textbf{+3.76\%}\\
\bottomrule
\end{tabular}
\vspace{-2mm}
\caption{Performance comparison of LLMinit on the Amazon Books dataset, with different tokens for LLM embedding generation.}  
\vspace{-8mm}
  \label{tbl:llminit_toks}
\end{table}

We observe that \TheName{} removes the text segment \textcolor{sftblue}{\textit{``released in the summer of In accordance''}} and instead selects new content such as \textcolor{mygreen}{\textit{``cucumber, endless mode, Penny''}}. 
The newly selected tokens are predominantly content words—nouns, adjectives, or named entities—that convey specific and semantically rich information about the item. 
In contrast, the removed tokens consist largely of function words (e.g., prepositions, pronouns, and articles) that carry limited discriminative power. 
By prioritizing content words, \TheName{} enriches item representations with more informative features, thereby enhancing downstream recommendation performance.

Figure~\ref{fig:case_1_tok_dist} further illustrates that \TheName{} tends to filter out frequent (``hot'') tokens. This behavior is meaningful: hot tokens are shared across many items and thus exhibit low semantic specificity, making it difficult for the user LLM to distinguish between items and learn effective representations.

\section{Related Work}

\paragraph{LLM-based SR} LLMs possess extensive world knowledge and strong reasoning capabilities, offering two principal advantages for sequential recommendation: rich semantic understanding and powerful modeling capacity derived from their high-dimensional parameter space. These characteristics enable LLMs to effectively capture complex user behavior patterns and thereby enhance recommendation performance \cite{sun2024large, zhang2024notellm, zhang2025notellm, qiao2024llm4sbr, liu2024llm, xu2024slmrec}. In the context of SR, LLM-based approaches can be broadly categorized into two groups. 

The first category leverages LLMs as semantic initializers to enrich ID-based item embeddings
\cite{Harte2023LeveragingLL, liu2024large, alphafuse}. A key challenge in these methods arises from the dimensional mismatch between the high-dimensional LLM embeddings and the typically lower-dimensional ID embeddings used in recommendation systems. To address this, dimensionality reduction techniques are commonly employed—such as Principal Component Analysis in LLMEmb \cite{liu2024large} and Singular Value Decomposition in AlphaFuse \cite{alphafuse}.

The second category adopts LLMs in an end-to-end manner, where the model directly utilizes the LLM and tokenized item sequences to predict the next item in the interaction sequence of users. HLLM \cite{HLLM}
demonstrates significant performance improvements over traditional methods. This end-to-end integration enables modeling of both semantic content 
and collaborative signals from user interactions, thereby advancing the recommendation performance.

However, due to computational constraints, only the first portion of the textual description of each item is used.
This truncation limits the model's ability to capture the full richness of the semantic information present in longer descriptions, potentially leading to suboptimal performance. In contrast, our method selects the most informative tokens from item descriptions.

\section{Conclusion}
In this paper, we propose a novel Token Selection approach tailored for LLM-driven Sequential Recommendation systems, i.e., \TheName{}. 
We develop a three-stage workflow. First, we pre-train the foundational LLM-based sequential recommendation model to establish a robust baseline. Second, we train a dedicated policy to identify informative tokens; additionally, a proxy reward function is designed to facilitate chunk-oriented token selection, addressing the challenges of granularity in token identification. Finally, leveraging the identified tokens, we retrain the pre-trained baseline model to optimize its recommendation performance. Comprehensive experimental results demonstrate the effectiveness (up to 31.25\%) and efficiency (up to 63.4\%) of the proposed method.

\cleardoublepage
\section*{Limitations}
The proposed \TheName{} employs a three-stage pipeline to select the most informative tokens from the item’s textual features.
While \TheName{} significantly improves both recommendation performance and inference efficiency compared to standard training methods, it incurs additional training time due to Stages 2 and 3. However, this overhead is practically acceptable, as models are typically trained once and deployed over an extended period—rendering inference efficiency a more critical concern than the one-time training cost.
\bibliography{acl}

\appendix
\input{appendix.tex}

\end{document}

%% file: appendix.tex
\clearpage
\setcounter{theorem}{0}
\section{More Experimental Setup}
\label{app:sec:more-setup}
\begin{table}[h]
\centering
\small
\begin{tabular}{l c c c}
  \toprule
  Dataset      & \#Users   & \#Items   & \#Interactions \\
  \midrule
  Amazon Books & 694,898   & 686,624   & 10,053,086     \\
  Pixel        & 200,000   & 96,282    & 3,965,656      \\
  \bottomrule
\end{tabular}
\caption{Statistics of the Pixel and Amazon Books Datasets.}
\label{app:tbl:stat}
\end{table}

In this section, we detail the experimental setup used to evaluate our approach.

We conduct experiments on two publicly available datasets: the Amazon Book Reviews dataset~\cite{10.1145/2766462.2767755} and the Pixel dataset~\cite{cheng2023image}. We present the statistics of the datasets used in Table~\ref{app:tbl:stat}. 
The Amazon Books dataset contains 694,898 users and 686,624 items, while the Pixel dataset contains 200,000 users and 96,282 items. Both datasets exhibit high sparsity, with user-item interaction densities of $2.1 \times 10^{-5}$ and $2.1 \times 10^{-4}$, respectively. Items with missing attributes are filtered out. We adopt a leave-one-out evaluation protocol: the most recent interaction is used for testing, the second-to-last for validation, and all earlier interactions for training. Performance is evaluated using Recall@K (R@K) and NDCG@K (N@K).

Following the setup of HLLM~\cite{HLLM}, we set the learning rate to $1 \times 10^{-4}$ for all baseline models. HLLM is trained for 5 epochs, while other models are trained for up to 200 epochs with early stopping to prevent overfitting. We fix the number of negative samples at 128 and set the maximum sequence length to 10. In the policy learning phase, the model is trained for 5 epochs, with a learning rate $1 \times 10^{-3}$. 

To investigate the impact of textual input length, we truncate item descriptions to 64, 128, and 256 tokens, respectively. For the Amazon Books dataset, we use the \textit{title} and \textit{description} fields. For the Pixel dataset, we utilize the \textit{title}, \textit{tag}, and \textit{description} text features.
Since starting tokens occupy a distinct feature space~\cite{han2024lm}, \TheName{} retains the first \textit{prefix size} to 16 tokens from all sampled sequences.

We train all models on a GPU cluster equipped with 8 H100 and 8 A800 GPUs. Each experiment is conducted three times, and we report the average performance to ensure statistical reliability.

\section{Case Study}
\label{app:sec:case2}
\begin{figure*}[t]
\centering
\includegraphics[width=\linewidth]{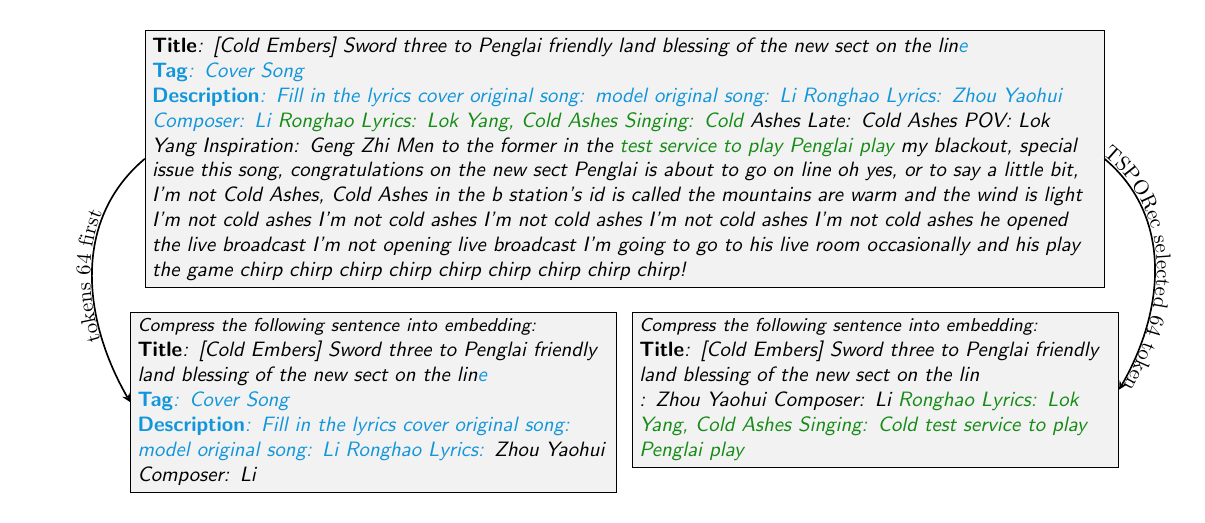}
\caption{Case study on token selection for item ``i113948'' in the Pixel dataset, with 64 tokens selected by each method.}
\label{fig:case_2}
\end{figure*}
We conduct an additional case study to further examine the token selection behavior of the competing methods. The results for item \textit{``i113948''} are illustrated in Figure~\ref{fig:case_2}. Consistent with earlier observations, significant differences are evident between the two strategies: tokens selected exclusively by \TheName{} are highlighted in green, while those selected only by the first-$k$ strategy are marked in blue, allowing for a fine-grained comparison of their selection patterns.

\begin{figure*}[t]
\centering
\includegraphics[width=\linewidth]{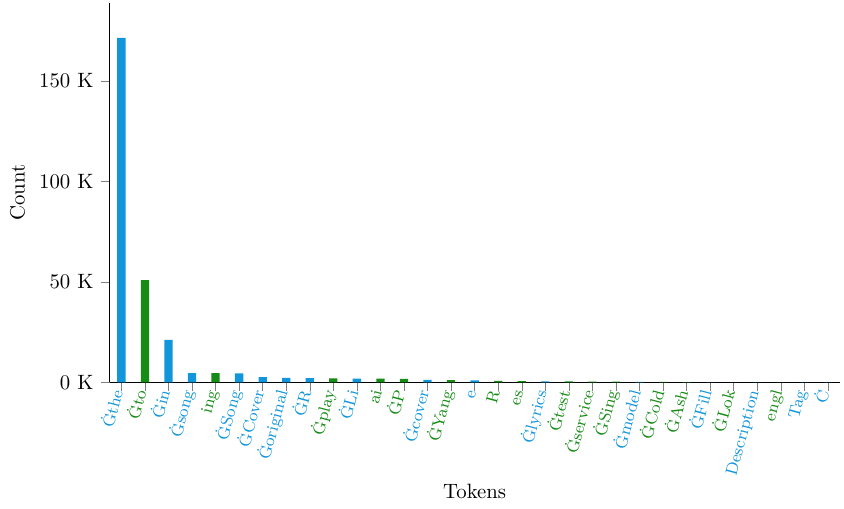}
\caption{Comparison of token distributions selected by first-$k$ and \TheName{}.}
\label{app:fig:case_2_token_dist}
\end{figure*}

In this instance, \TheName{} removes tag-related segments such as \textcolor{sftblue}{\textit{``tag: Cover Song''}} and \textcolor{sftblue}{\textit{``Fill in the lyrics cover original song: model original song''}}, and instead selects the segment \textcolor{mygreen}{\textit{``Ronghao Lyrics: Lok Yang, Cold Ashes Singing: Cold test service to play Penglai play''}}. 
This selection is semantically meaningful: for a music item, metadata about the lyricist and composer provides more concrete, fine-grained, and discriminative signals than generic categorical tags. 
By prioritizing such specific metadata, \TheName{} enriches the item representation, thereby improving downstream recommendation performance.

As shown in Figure~\ref{app:fig:case_2_token_dist}, consistent with the previous case study, \TheName{} tends to filter out high-frequency tokens. 
This behavior encourages the item LLM to learn more discriminative embeddings, as frequent tokens often carry less item-specific semantic information. Please note that, whereas Figure~\ref{fig:case_1_tok_dist} shows the token distribution across the full set of text features, Figure~\ref{app:fig:case_2_token_dist} depicts the distribution of token counts in the selection results. The results reveal that—under both counting metrics—the proposed method consistently filters out frequent (“hot”) tokens.

\section{More Experimental Results}
\label{app:sec:more-exp}
\paragraph{The Impact of Training and Selection Chunk Sizes}
\begin{figure*}[t]
  \centering
    \begin{subfigure}[b]{0.24\textwidth}
        \centering
        \includegraphics[width=\textwidth]{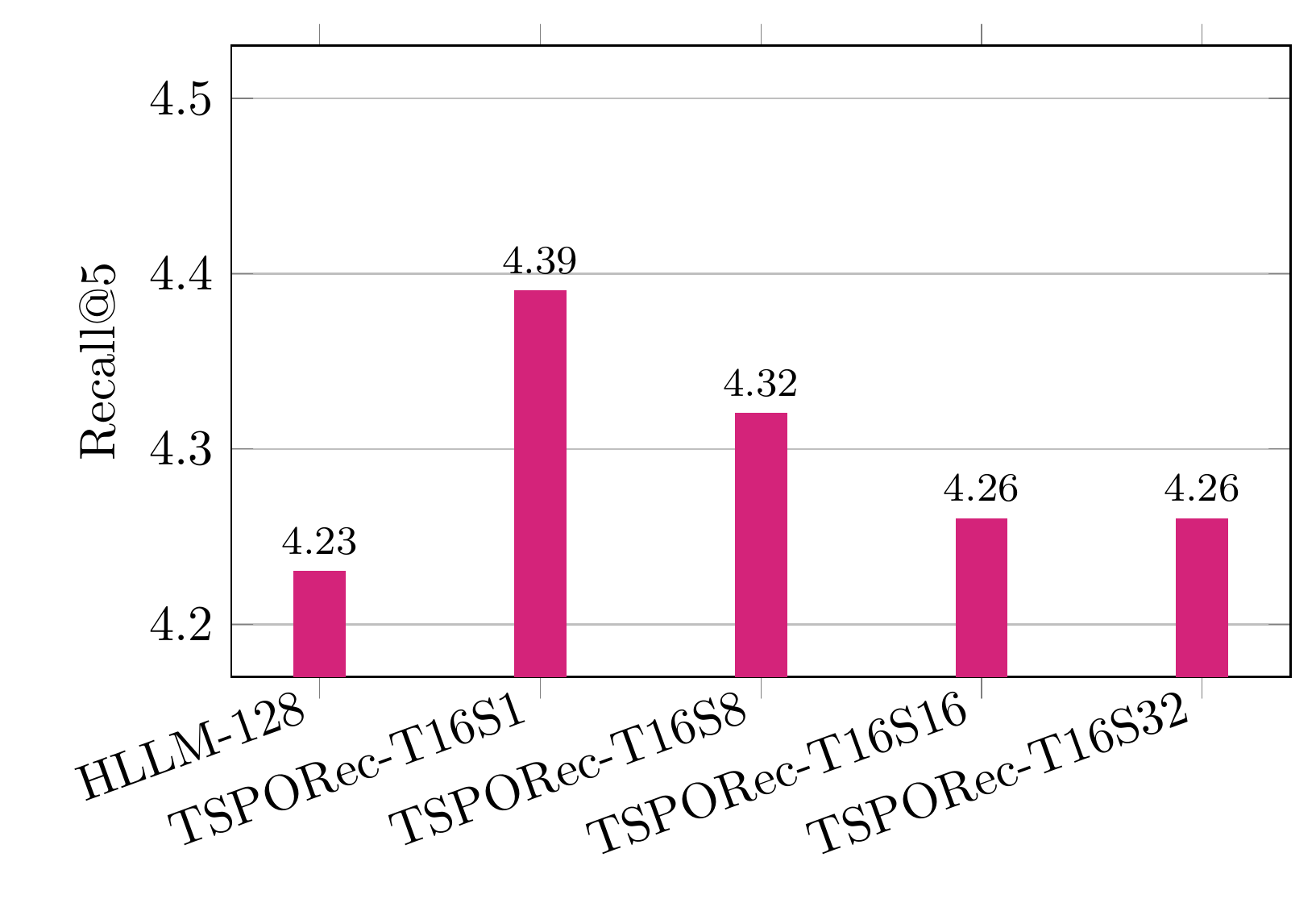} 
        \caption{Recall@5}
        \label{fig:sub1}
    \end{subfigure}
    \hfill 
    \begin{subfigure}[b]{0.24\textwidth}
        \centering
        \includegraphics[width=\textwidth]{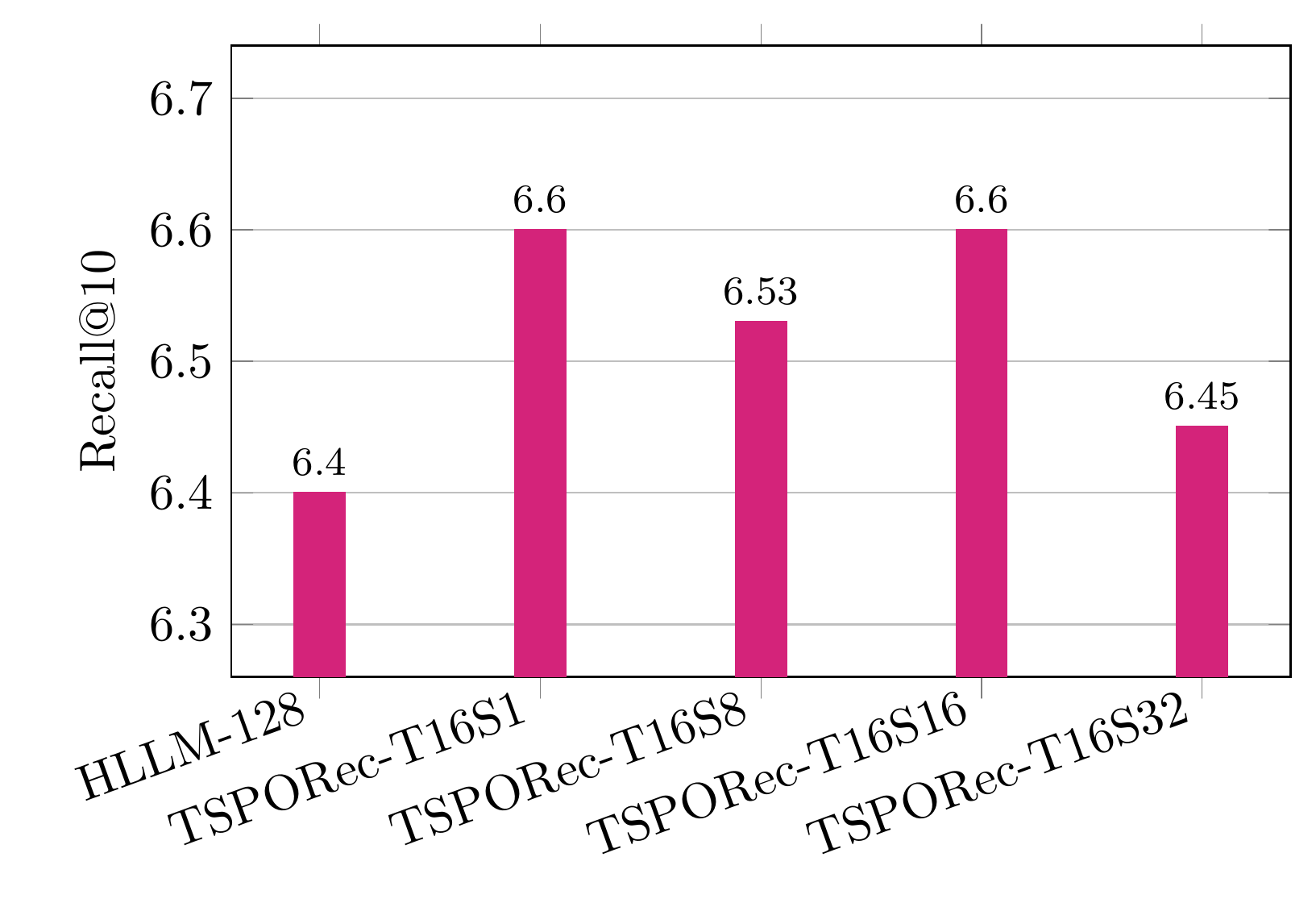} 
        \caption{Recall@10}
        \label{fig:sub1}
    \end{subfigure}
    \hfill 
    \begin{subfigure}[b]{0.24\textwidth}
        \centering
        \includegraphics[width=\textwidth]{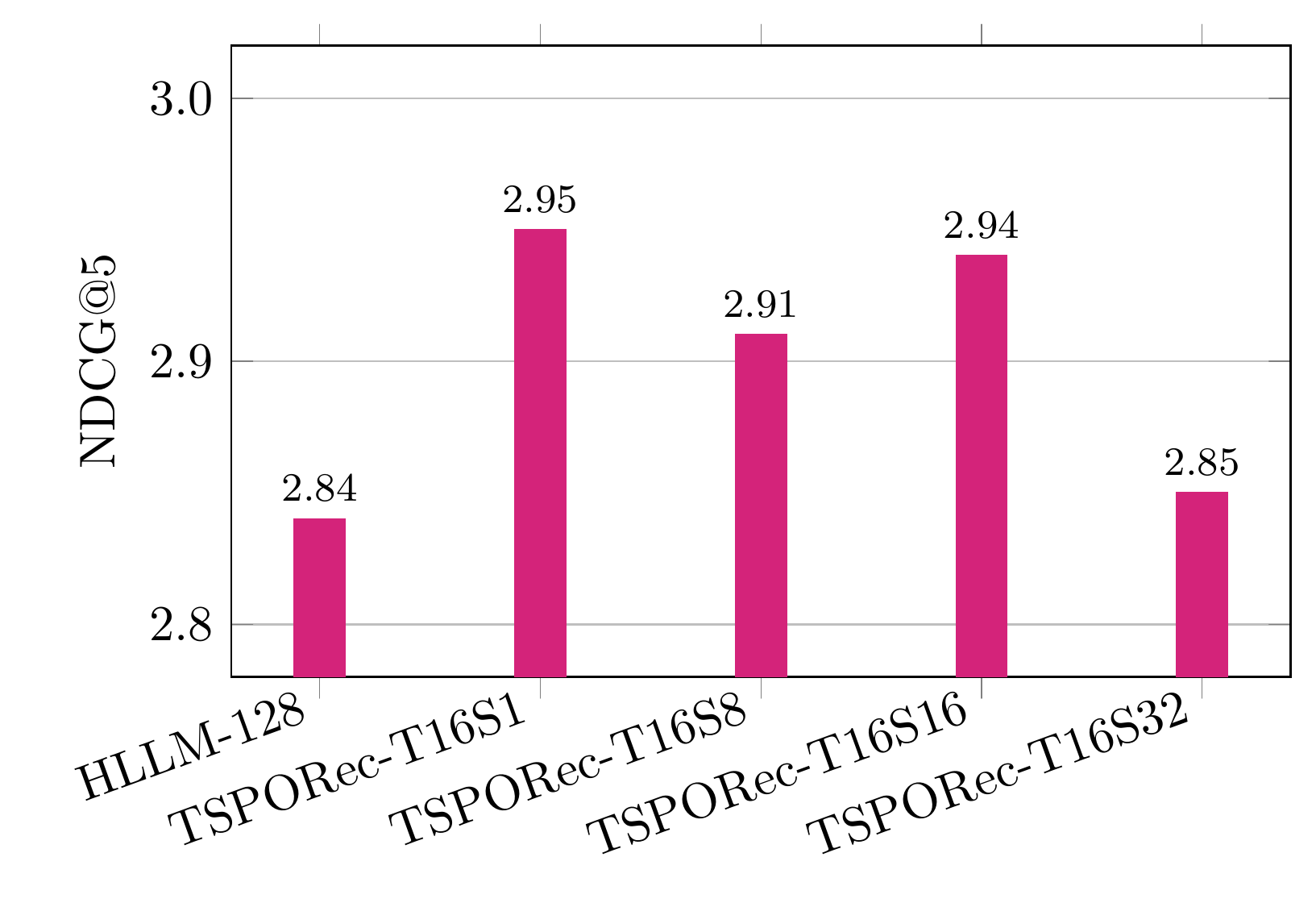} 
        \caption{NDCG@5}
        \label{fig:sub1}
    \end{subfigure}
    \hfill 
    \begin{subfigure}[b]{0.24\textwidth}
        \centering
        \includegraphics[width=\textwidth]{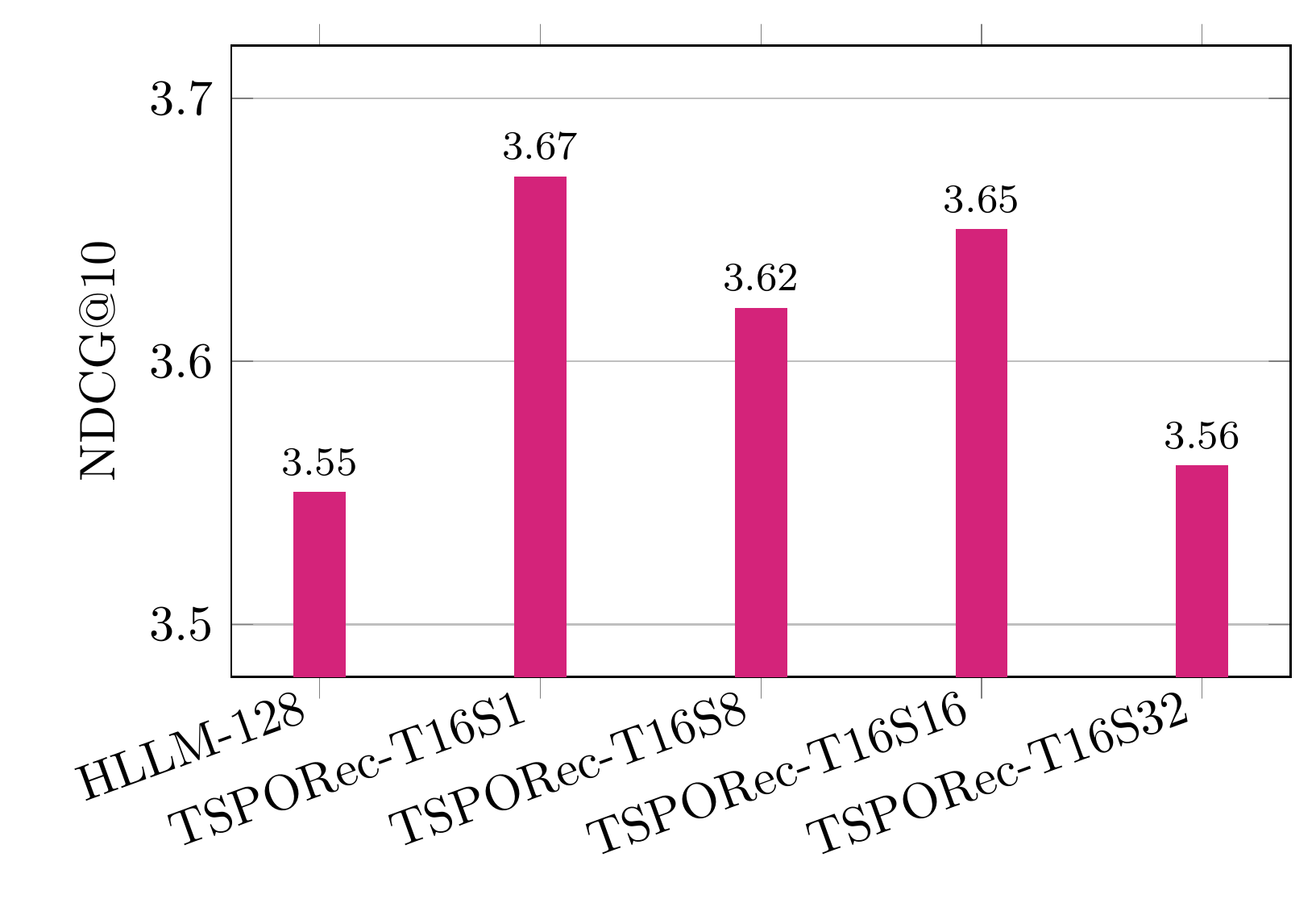} 
        \caption{NDCG@10}
        \label{fig:sub1}
    \end{subfigure}
\caption{Performance of HLLM and \TheName{} on Amazon Books with different training and selection chunk sizes, using \texttt{Qwen3-Embedding-0.6B} as the backbone model.}    
\label{fig:vary_chunk_size}
\end{figure*}

Since the training and selection chunk sizes may differ, we conduct an ablation study to investigate the impact of training and selection chunk sizes on recommendation performance.  we denote each configuration as \TheName{}-T$k_1$S$k_2$, where the model is trained with chunk size $k_1$ and performs token selection during inference with chunk size $k_2$. We evaluate \TheName{} on the Amazon Books dataset using an input sequence length of 128 tokens. The results, shown in Figure~\ref{fig:vary_chunk_size}, demonstrate that: (1) across all configurations, the proposed \TheName{} consistently outperforms the HLLM baseline by up to 3.54\%, indicating robustness to varying parameter settings.

(2) On average, \TheName{}-T16S1, \TheName{}-T16S8, \TheName{}-T16S16, and \TheName{}-T16S32 outperform HLLM by 3.54\%, 2.15\%, 2.54\%, and 0.53\%, respectively. A general downward trend in performance is observed as the selection chunk size increases, indicating that smaller chunks yield more effective token selection. We further validate this trend using top-$k$ logit-based selection, with results presented in Table~\ref{app:tbl:topk_logits_selection}. The consistent performance degradation with larger chunk sizes across both methods suggests a shared preference for finer-grained selection units. This observation further implies that the limited effectiveness of top-$k$ logit selection may stem from its unidirectional nature, rather than from disruptions to semantic sentence structure.

\paragraph{Training Time}
Since \TheName{} employs a three-stage training pipeline, we report the training time for each stage in Table~\ref{app:tbl:training-time}. 
The model is trained on a GPU cluster equipped with 8 NVIDIA H100 GPUs. 
Due to the additional stages, \TheName{} requires approximately 21.5 hours more than a single-stage baseline. 
This overhead is acceptable in practice, as the full training procedure is performed only once and the resulting model is deployed for an extended period.

\begin{table*}[t]
  \centering
  \begin{tabular}{ccc}
    \toprule
    Stage 1: \textit{Pretraining} & 
    Stage 2: \textit{Token Selection Policy Learning} & 
    Stage 3: \textit{Token Selection and Retraining} \\
    \midrule
    13h\,49m\,1s & 8h\,56m\,24s & 12h\,40m\,52s \\
    \bottomrule
  \end{tabular}
  \caption{Training time of \TheName{} for each stage.}
  \label{app:tbl:training-time}
\end{table*}

\begin{table}[t]
  \centering
  \small
\begin{tabular}{c cccc r}
\toprule
Method &  R@5 & R@10 & Impr. (avg) \\
\midrule
HLLM-$64$-S$1$  & \textbf{3.39} & \textbf{5.34}& +0.0\%\\
HLLM-$64$-S$8$  & 3.16 & 5.03  & -6.29\% \\
\bottomrule
\end{tabular}
\caption{Performance comparison of HLLM with top-\textit{k} token selection under different selection chunk sizes on the Amazon Books dataset, using Qwen3-Embedding-0.6B as the backbone and truncating text sequences to 64 tokens.}  
\label{app:tbl:topk_logits_selection}
\end{table}

\section{More Related Work}
\subsection{Sequential Recommendation (SR)}
Sequential recommendation (SR) captures the diverse preferences and evolving interests of users by modeling their recent interaction sequences \cite{hidasi2015sessionbased,10.1145/2988450.2988452,10.1609/aaai.v33i01.33015941,kang2018selfattentive,10.1145/3336191.3371786,10.1145/3383313.3412258, hstu, Harte2023LeveragingLL, HLLM}.
\paragraph{Traditional SR} Traditional sequential recommendation methods primarily focus on enhancing neural architectures to more accurately model users' diverse interests. Representative examples include Caser \cite{10.1145/3159652.3159656}, SASRec \cite{kang2018selfattentive}, BERT4Rec \cite{10.1145/3357384.3357895}, CLS2Rec \cite{cls4rec}, and CoSeRec \cite{CoSeRec}. While effective, these approaches often struggle to scale to large parameter counts. To address this limitation, generative recommendation methods such as DreamRec \cite{dreamrec} and HSTU \cite{hstu} have been proposed, which are inherently scalable to larger models. However, these generative approaches typically fail to leverage the rich semantic knowledge pre-encoded in pre-trained LLMs, thereby limiting their potential to further improve recommendation performance. In contrast, our proposed \TheName{} effectively harnesses the full semantic knowledge embedded in pre-trained LLMs, enabling more accurate sequential recommendations.

\subsection{Prompt Engineering and Preference Optimization}
\paragraph{Prompt Engineering}
A substantial body of research shows that fine-grained control over individual tokens can unlock more intelligent, efficient, and trustworthy LLM applications, even under constrained token budgets~\cite{lester2021power, li2021prefix, zhao2025pmpo, deng2022rlprompt}. 
\textit{Prompt Tuning}~\cite{lester2021power} and \textit{Prefix-Tuning}~\cite{li2021prefix} operate in the model's continuous embedding space, using gradient-based optimization to learn effective prompt representations. 
RL-Prompt~\cite{deng2022rlprompt} reformulates discrete prompt optimization as a reinforcement learning problem, where token sequences are selected to maximize task-specific rewards. The method achieves strong performance across a range of NLP tasks and intriguingly reveals that optimal prompts often deviate from standard grammatical conventions to enhance effectiveness.
In contrast to these methods—which optimize prompts to steer model outputs—our work focuses on identifying informative tokens directly from the input.

\paragraph{Preference Optimization}
Preference optimization aims to align large language models (LLMs) with human preferences and values. 
Direct Preference Optimization (DPO) \cite{rafailov2024direct} is a representative method that increases the likelihood of outputs preferred by humans while decreasing the likelihood of disfavored ones. 
DPO has inspired numerous follow-up studies \cite{meng2024simpo, azar24aipo, Lu2024EliminatingBL, Ethayarajh2024KTOMA,hong-etal-2024-orpo,zhu2025acl}. 
Unlike these approaches, our method does not modify the output probability distribution of LLMs. 
Instead, we adjust token-level probabilities in the input space to identify which input segments, when emphasized, can enhance downstream recommendation performance.

\section{Algorithm Framework}

\begin{algorithm}
  \SetKwInOut{Input}{Input}
  \SetKwInOut{Output}{Output}
  
  \Input{Training Instances: $\mathcal{D}_0$, Number of Epochs: $U$, Initial Model: $\mathbf{\pi}_0$. }
  \Output{Final Model: $\mathbf{\pi}_U$, Updated Training Instances:  $\mathcal{D}_U$}
  
  \tcp{Pretrain LLM-based recommendation model $\mathbf{\pi}_1$}
  $\mathbf{\pi}_1 \leftarrow \mathbf{\pi}_0$
  
  \For{$i\leftarrow 1$ \KwTo $U$}{
    $\mathbf{x} \sim \mathcal{D}_0$  \qquad\tcp{Sample training data from $\mathcal{D}_0$}
    $\mathbf{\pi}_1 \leftarrow \textsc{Update} (\mathbf{\pi}_1, \mathbf{x})$
  }
  \tcp{Freeze the LLM backbone and initialize the policy head.}
  $\mathbf{\pi}_\theta \leftarrow \textsc{Freeze-And-Init-Policy} (\mathbf{\pi}_1, \theta)$
  
  \For{$i \leftarrow 1$ \KwTo $U$}{
    $\mathbf{x} \sim \mathcal{D}_0$ \qquad \tcp{Sample training data from $\mathcal{D}_0$}
    $\mathbf{s}_0 \sim \pi_\theta(\mathbf{x})$ \qquad \tcp{Sampling Item Tokens.} 
    $\mathbf{s}_1 \sim \pi_\theta(\mathbf{x})$
    
    $c_0 \leftarrow \textsc{Cross-Entropy}(\mathbf{s}_0, \mathbf{\pi}_1)$
    
    $c_1 \leftarrow \textsc{Cross-Entropy}(\mathbf{s}_1, \mathbf{\pi}_1)$
    
    $r = \textsc{Compute-Reward}(c_0, c_1)$
    
    $\mathbf{\pi}_\theta \leftarrow \textsc{Update}(\mathbf{\pi}_\theta, r)$
  }
  $\mathcal{D}_U \leftarrow \textsc{Select}(\mathbf{\pi}_\theta, \mathcal{D}_0)$
  
  $\mathbf{\pi}_U \leftarrow \mathbf{\pi}_0$

  \For{$i\leftarrow 1$ \KwTo $U$}{
    $\mathbf{x} \sim \mathcal{D}_U$  \qquad\tcp{Sample training data from $\mathcal{D}_U$}
    $\mathbf{\pi}_U \leftarrow \textsc{Update} (\mathbf{\pi}_U, \mathbf{x})$
  }  
  \Return{$\pi_U, \mathcal{D}_U$}
  \LinesNumberedHidden
  \caption{The \TheName{} pipeline}\label{app:algo}
\end{algorithm}
We present the pseudocode of \TheName{} in Algorithm~\ref{app:algo}. 
\TheName{} consists of three main stages. 
In Stage 1, an LLM-based recommendation model (e.g., HLLM) is pretrained on user-item interaction data using textual features. 
In Stage 2, the LLM backbone is frozen, and a policy head is introduced; \TheName{} then learns a token selection policy using the proposed proxy reward. 
In the final stage, the learned policy is applied to select the most informative tokens from the dataset, and the base model is retrained on the refined item representations.

\section{Proof}
\label{app:proof}
\begin{theorem}
\label{prop:core}
Under the framework of \TheName{}, the following properties hold:
\begin{itemize}
  \item Smaller values of the cross-entropy losses $\mathrm{ce}_1$ and $\mathrm{ce}_2$ in Eq.~\eqref{eq:ce-first} and Eq.~\eqref{eq:ce-second} yield a tighter approximation to the ground-truth preference distribution in terms of KL divergence.
  
  \item Token chunks shared between the two sampled sequences $\mathcal{I}'_M$ and $\mathcal{I}''_M$ do not contribute to the gradient updates of the policy parameters $\theta$.
  
  \item The objective in Eq.~\eqref{eq:expected-reward} increases the likelihood of selecting informative token chunks while suppressing less informative ones.
\end{itemize}
\end{theorem}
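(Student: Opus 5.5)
I will prove the three items separately, since each rests on a different elementary identity. For the first item, I will use the cross-entropy form of Eq.~\eqref{eq:ce}. Let $q = \mathrm{softmax}(\mathbf{E})$ be the model's distribution over the $N+1$ candidates when the user embedding is $\mathbf{e}'_u$ (respectively $\mathbf{e}''_u$), and let $\mathit{label}$ play the role of the ground-truth preference distribution $p^\ast$. The decomposition
\[
\mathcal{L}_{\mathrm{CE}}(q, p^\ast) = H(p^\ast) + \mathrm{KL}(p^\ast \,\|\, q)
\]
holds, and $H(p^\ast)$ does not depend on the sampled tokens. For a one-hot label it is even zero. Hence $\mathcal{L}_{\mathrm{ce}}^1 \le \mathcal{L}_{\mathrm{ce}}^2$ is equivalent to $\mathrm{KL}(p^\ast\|q') \le \mathrm{KL}(p^\ast\|q'')$, which is exactly the first claim. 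If one wants a soft ground-truth $p^\ast$ rather than the one-hot label, the same identity applies, because the entropy term stays a constant with respect to the sampling.

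For the second item, I will expand the objective. Since $P(\mathcal{I}_M)=\prod_j\prod_\ell p_{j\ell}$, we have
\[
\log P(\mathcal{I}'_M\mid\theta)-\log P(\mathcal{I}''_M\mid\theta)=\sum_{j}\Big(\sum_{\ell\in T'_j}\log p_{j\ell}-\sum_{\ell\in T''_j}\log p_{j\ell}\Big).
\]
Here the product is read as ranging over the selected tokens, with chunk probabilities $P(c)=\prod_{i\in c}p_i$. Split each $T'_j$ and $T''_j$ into the shared part $S_j=T'_j\cap T''_j$ and the disjoint parts $T'_j\setminus S_j$ and $T''_j\setminus S_j$. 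The terms indexed by $S_j$ appear with opposite signs and cancel identically as functions of $\theta$. The reward $r$ is a constant with respect to $\theta$ (it is computed through the frozen $f,g$ and the comparison is non-differentiable). So the gradient $r\,\nabla_\theta(\cdots)$ involves only chunks in the symmetric difference, which gives the second claim.

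For the third item, I will compute the gradient of the remaining terms:
\[
\nabla_\theta\mathcal{R}=\mathbb{E}\Big[r\Big(\sum_{\ell\in T'\setminus S}\nabla_\theta\log p_\ell-\sum_{\ell\in T''\setminus S}\nabla_\theta\log p_\ell\Big)\Big].
\]
When $r=1$, the first sample was more informative in the sense of the first item, since it is closer to the ground truth in KL. Gradient ascent then raises $\log p_\ell$ for the chunks unique to $T'$ and lowers it for those unique to $T''$; the case $r=-1$ is symmetric. Because of the softmax normalization, raising some logits $\mathrm{info}(t_\ell)$ necessarily lowers the probability mass of the others, so the suppression is genuine and not merely relative. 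Combining this with the second item, the update moves probability exactly toward the chunks that distinguish the winning sample, which is the third claim. This mirrors the DPO-style preferred/dispreferred argument.

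The main obstacle is making the third item rigorous. ``Informative'' is only operationally defined, through the sign of $\mathcal{L}_{\mathrm{ce}}^1-\mathcal{L}_{\mathrm{ce}}^2$, and the distinguishing chunks are credited jointly rather than individually. I would state the claim at the level of the expected first-order change: a small step $\eta\nabla_\theta\mathcal{R}$ increases $\sum_{\ell}\log p_\ell$ over the winning sample's unique chunks relative to the losing sample's, by $\eta\|\nabla\|^2\ge 0$ in expectation. I would not claim any per-token guarantee.
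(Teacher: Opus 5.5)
Your proposal is correct and follows essentially the same route as the paper's proof: for the first item, the cross-entropy $=$ entropy $+$ KL decomposition with the ground-truth entropy held constant; for the second, cancellation of the shared chunks' log-probability terms in $\log P(\mathcal{I}'_M\mid\theta)-\log P(\mathcal{I}''_M\mid\theta)$; and for the third, a DPO-style argument on the sign of $r$. Your additions (that $r$ is constant in $\theta$ once the samples are drawn, and the aggregate first-order statement) make it somewhat more careful than the paper, but the softmax ``genuine suppression'' remark only holds when the other logits are held fixed, since $W_Q, W_K$ are shared across tokens, so it is the aggregate statement that actually carries the third claim.
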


\begin{proof}
  Let $\mathcal{L} = -\mathcal{R}(\theta)$.
  
(1) We denote the preference distribution associated with $\mathbf{e}'_u$ as $p'$, that associated with $\mathbf{e}''_u$ as $p''$, and the ground-truth preference distribution as $q$. By definition, the Kullback–Leibler (KL) divergences are given by
\begin{align}
  \mathrm{KL}(q \,\|\, p') 
    &= \int q(x) \log \frac{q(x)}{p'(x)} \, \mathrm{d}x \\
    &= \mathrm{ce}_1 - H(q),
\end{align}
and
\begin{align}
  \mathrm{KL}(q \,\|\, p'') 
    &= \int q(x) \log \frac{q(x)}{p''(x)} \, \mathrm{d}x \\
    &= \mathrm{ce}_2 - H(q),
\end{align}
where $H(q) = -\int q(x) \log q(x) \, \mathrm{d}x$ denotes the entropy of $q$. 

Since $H(q)$ is constant with respect to the model parameters, the comparison between $\mathrm{ce}_1$ and $\mathrm{ce}_2$ is equivalent to the comparison between $\mathrm{KL}(q \,\|\, p')$ and $\mathrm{KL}(q \,\|\, p'')$. In other words, a smaller cross-entropy value corresponds to a tighter approximation of the ground-truth preference distribution in terms of KL divergence.  
  
(2) Assume there exists a common chunk $C$ shared between $P_1=P(\mathcal{I}'_M \mid \theta)$ and $P_2=P(\mathcal{I}''_M \mid \theta)$. The gradient of the loss with respect to $C$ is given by:
\begin{align}
\frac{\partial \mathcal{L}}{\partial C} & = -r \frac{\partial (\log P_1 - \log P_2)}{\partial C} \nonumber \\
& = -r \frac{\partial (\log P_1(C) - \log P_2(C))}{\partial C} + 0 \nonumber \\
& = 0
\end{align}
Hence, the gradient $\frac{\partial \mathcal{L}}{\partial C}$ vanishes.

(3) If $\text{ce}_1 < \text{ce}_2$, indicating that the sequence associated with $P_1=P(\mathcal{I}'_M \mid \theta)$ yields lower cross-entropy than $P_2 = P(\mathcal{I}''_M \mid \theta)$ and is thus more informative, we set the reward $r = 1$. In this case, the optimization objective is:
\begin{equation}
  \mathcal{L} = \log P_2 - \log P_1
  \label{eq:ce1}
\end{equation}
Minimizing Eq.~(\ref{eq:ce1}) increases the likelihood of $P_1$ while decreasing that of $P_2$.

Conversely, if $\text{ce}_1 > \text{ce}_2$, implying that $P_2$ corresponds to the more informative sequence, we set $r = -1$. The resulting loss becomes:
\begin{equation}
  \mathcal{L} = \log P_1 - \log P_2
  \label{eq:ce2}
\end{equation}
Minimizing Eq.~(\ref{eq:ce2}) promotes $P_2$ and suppresses $P_1$.

In both scenarios, the training objective effectively increases the probability of the more informative sequence, thereby guiding the policy network to select token chunks that preserve semantically salient information.
\end{proof}

%% file: acl.bib
@article{cls4rec,
title={Contrastive Learning for Sequential Recommendation},
author={Xu Xie and Fei Sun and Zhaoyang Liu and Shiwen Wu and Jinyang Gao and Bolin Ding and Bin Cui},
journal={arXiv preprint arXiv:2010.14395},
year={2021}
}

@article{CoSeRec,
title={Contrastive Self-supervised Sequential Recommendation with Robust Augmentation},
author={Zhiwei Liu and Yongjun Chen and Jia Li and Philip S. Yu and Julian McAuley and Caiming Xiong},
journal={arXiv preprint arXiv:2108.06479},
year={2021}
}

@inproceedings{10.1145/3357384.3357895,
author = {Sun, Fei and Liu, Jun and Wu, Jian and Pei, Changhua and Lin, Xiao and Ou, Wenwu and Jiang, Peng},
title = {BERT4Rec: Sequential Recommendation with Bidirectional Encoder Representations from Transformer},
year = {2019},
booktitle = {Proceedings of the 28th ACM International Conference on Information and Knowledge Management},
pages = {1441–1450},
numpages = {10},
}

@inproceedings{10.1145/3159652.3159656,
author = {Tang, Jiaxi and Wang, Ke},
title = {Personalized Top-N Sequential Recommendation via Convolutional Sequence Embedding},
year = {2018},
booktitle = {Proceedings of the Eleventh ACM International Conference on Web Search and Data Mining},
pages = {565–573},
numpages = {9},
}

@article{dreamrec,
title={Generate What You Prefer: Reshaping Sequential Recommendation via Guided Diffusion},
author={Zhengyi Yang and Jiancan Wu and Zhicai Wang and Xiang Wang and Yancheng Yuan and Xiangnan He},
journal={arXiv preprint arXiv:2310.20453},
year={2023}
}

@inproceedings{hstu,
author = {Zhai, Jiaqi and Liao, Lucy and Liu, Xing and Wang, Yueming and Li, Rui and Cao, Xuan and Gao, Leon and Gong, Zhaojie and Gu, Fangda and He, Jiayuan and Lu, Yinghai and Shi, Yu},
title = {Actions speak louder than words: trillion-parameter sequential transducers for generative recommendations},
year = {2024},
booktitle = {Proceedings of the 41st International Conference on Machine Learning},
}

@article{hidasi2015sessionbased,
  title = {Session-based Recommendations with Recurrent Neural Networks},
  author = {Hidasi, Balazs and Karatzoglou, Alexandros and Baltrunas, Linas and Tikk, Domonkos},
  journal={arXiv preprint arXiv:1511.06939},  
  year = {2015}  
}

@inproceedings{10.1145/2988450.2988452,
author = {Tan, Yong Kiam and Xu, Xinxing and Liu, Yong},
title = {Improved Recurrent Neural Networks for Session-Based Recommendations},
year = {2016},
doi = {10.1145/2988450.2988452},
booktitle = {Proceedings of the 1st Workshop on Deep Learning for Recommender Systems},
pages = {17--22},
numpages = {6},
}

@inproceedings{10.1609/aaai.v33i01.33015941,
author = {Zhou, Guorui and Mou, Na and Fan, Ying and Pi, Qi and Bian, Weijie and Zhou, Chang and Zhu, Xiaoqiang and Gai, Kun},
title = {Deep Interest Evolution Network for Click-through Rate Prediction},
year = {2019},
doi = {10.1609/aaai.v33i01.33015941},
booktitle = {Proceedings of the Thirty-Third AAAI Conference on Artificial Intelligence and Thirty-First Innovative Applications of Artificial Intelligence Conference and Ninth AAAI Symposium on Educational Advances in Artificial Intelligence},
}

@inproceedings{rendle10mc,
author = {Rendle, Steffen and Freudenthaler, Christoph and Schmidt-Thieme, Lars},
title = {Factorizing Personalized Markov Chains for Next-Basket Recommendation},
year = {2010},
booktitle = {WWW},
pages = {811--820},
numpages = {10}
}

@inproceedings{cheng13poi,
author = {Cheng, Chen and Yang, Haiqin and Lyu, Michael R. and King, Irwin},
title = {Where You like to Go next: Successive Point-of-Interest Recommendation},
year = {2013},
isbn = {9781577356332},
booktitle = {IJCAI},
pages = {2605--2611},
numpages = {7}
}

@inproceedings{kang2018selfattentive,
  author = {Kang, Wang-Cheng and McAuley, Julian J.},
  booktitle = {ICDM},
  pages = {197--206},
  title = {Self-Attentive Sequential Recommendation},
  year = {2018}
}

@inproceedings{10.1145/3336191.3371786,
author = {Li, Jiacheng and Wang, Yujie and McAuley, Julian},
title = {Time Interval Aware Self-Attention for Sequential Recommendation},
year = {2020},
booktitle = {WSDM},
pages = {322--330},
numpages = {9}
}

@inproceedings{10.1145/3383313.3412258,
author = {Wu, Liwei and Li, Shuqing and Hsieh, Cho-Jui and Sharpnack, James},
title = {SSE-PT: Sequential Recommendation Via Personalized Transformer},
year = {2020},
booktitle = {RecSys},
pages = {328--337},
numpages = {10}
}

@article{Harte2023LeveragingLL,
  title={Leveraging Large Language Models for Sequential Recommendation},
  author={Jesse Harte and Wouter Zorgdrager and Panos Louridas and Asterios Katsifodimos and D. Jannach and Marios Fragkoulis},
  journal={Proceedings of the 17th ACM Conference on Recommender Systems},
  year={2023},
}

@article{HLLM,
title={HLLM: Enhancing Sequential Recommendations via Hierarchical Large Language Models for Item and User Modeling},
author={Junyi Chen and Lu Chi and Bingyue Peng and Zehuan Yuan},
journal={arXiv preprint arXiv:2409.12740},
year={2024}
}

@article{HLLM-Creator,
title={HLLM-Creator: Hierarchical LLM-based Personalized Creative Generation},
author={Junyi Chen and Lu Chi and Siliang Xu and Shiwei Ran and Bingyue Peng and Zehuan Yuan},
journal={arXiv preprint arXiv:2508.18118},
year={2025}
}

@article{liu2024large,
  title={Large Language Model Empowered Embedding Generator for Sequential Recommendation},
  author={Liu, Qidong and Wu, Xian and Wang, Wanyu and Wang, Yejing and Zhu, Yuanshao and Zhao, Xiangyu and Tian, Feng and Zheng, Yefeng},
  journal={arXiv preprint arXiv:2409.19925},
  year={2024}
}

@article{alphafuse,
  title={AlphaFuse: Learn ID Embeddings for Sequential Recommendation in Null Space of Language Embeddings},
  author={Guoqing Hu and An Zhang and Shuo Liu and Zhibo Cai and Xun Yang and Xiang Wang},
  journal={arXiv preprint arXiv:2504.19218},
  year={2025}
}

@article{Zhang2024TinyLlamaAO,
  title={TinyLlama: An Open-Source Small Language Model},
  author={Peiyuan Zhang and Guangtao Zeng and Tianduo Wang and Wei Lu},
  journal={arXiv preprint arXiv: 2401.02385},
  year={2024},
}

@article{baichuan2,
  title={Baichuan 2: Open Large-scale Language Models},
  author={Baichuan},
  journal={arXiv preprint arXiv: 2309.10305},
  year={2023},
}

@article{qwen3embedding,
  title={Qwen3 Embedding: Advancing Text Embedding and Reranking Through Foundation Models},
  author={Zhang, Yanzhao and Li, Mingxin and Long, Dingkun and Zhang, Xin and Lin, Huan and Yang, Baosong and Xie, Pengjun and Yang, An and Liu, Dayiheng and Lin, Junyang and Huang, Fei and Zhou, Jingren},
  journal={arXiv preprint arXiv:2506.05176},
  year={2025}
}

@inproceedings{10.1145/2766462.2767755,
author = {McAuley, Julian and Targett, Christopher and Shi, Qinfeng and van den Hengel, Anton},
title = {Image-Based Recommendations on Styles and Substitutes},
year = {2015},
doi = {10.1145/2766462.2767755},
booktitle = {Proceedings of the 38th International ACM SIGIR Conference on Research and Development in Information Retrieval},
pages = {43–52},
numpages = {10},
}

@article{cheng2023image,
  title={An Image Dataset for Benchmarking Recommender Systems with Raw Pixels},
  author={Cheng, Yu and Pan, Yunzhu and Zhang, Jiaqi and Ni, Yongxin and Sun, Aixin and Yuan, Fajie},
  journal={arXiv preprint arXiv:2309.06789},
  year={2023}
}

@article{infonce,
  title={Representation Learning with Contrastive Predictive Coding},
  author={Aaron van den Oord and Yazhe Li and Oriol Vinyals},
  journal={arXiv preprint arXiv:1807.03748},  
  year={2018}
}

@article{lester2021power,
  title={The power of scale for parameter-efficient prompt tuning},
  author={Lester, Brian and Al-Rfou, Rami and Constant, Noah},
  journal={arXiv preprint arXiv:2104.08691},
  year={2021}
}

@article{li2021prefix,
  title={Prefix-tuning: Optimizing continuous prompts for generation},
  author={Li, Xiang Lisa and Liang, Percy},
  journal={arXiv preprint arXiv:2101.00190},
  year={2021}
}

@article{zhao2025pmpo,
  title={PMPO: Probabilistic Metric Prompt Optimization for Small and Large Language Models},
  author={Zhao, Chenzhuo and Liu, Ziqian and Wang, Xingda and Lu, Junting and Ruan, Chaoyi},
  journal={arXiv preprint arXiv:2505.16307},
  year={2025}
}

@article{deng2022rlprompt,
  title={Rlprompt: Optimizing discrete text prompts with reinforcement learning},
  author={Deng, Mingkai and Wang, Jianyu and Hsieh, Cheng-Ping and Wang, Yihan and Guo, Han and Shu, Tianmin and Song, Meng and Xing, Eric P and Hu, Zhiting},
  journal={arXiv preprint arXiv:2205.12548},
  year={2022}
}

@inproceedings{sun2024large,
  title={Large language models enhanced collaborative filtering},
  author={Sun, Zhongxiang and Si, Zihua and Zang, Xiaoxue and Zheng, Kai and Song, Yang and Zhang, Xiao and Xu, Jun},
  booktitle={Proceedings of the 33rd ACM International Conference on Information and Knowledge Management},
  pages={2178--2188},
  year={2024}
}

@inproceedings{zhang2024notellm,
  title={Notellm: A retrievable large language model for note recommendation},
  author={Zhang, Chao and Wu, Shiwei and Zhang, Haoxin and Xu, Tong and Gao, Yan and Hu, Yao and Chen, Enhong},
  booktitle={Companion Proceedings of the ACM Web Conference 2024},
  pages={170--179},
  year={2024}
}

@inproceedings{zhang2025notellm,
  title={Notellm-2: Multimodal large representation models for recommendation},
  author={Zhang, Chao and Zhang, Haoxin and Wu, Shiwei and Wu, Di and Xu, Tong and Zhao, Xiangyu and Gao, Yan and Hu, Yao and Chen, Enhong},
  booktitle={Proceedings of the 31st ACM SIGKDD Conference on Knowledge Discovery and Data Mining},
  pages={2815--2826},
  year={2025}
}

@article{qiao2024llm4sbr,
  title={LLM4SBR: A lightweight and effective framework for integrating large language models in session-based recommendation},
  author={Qiao, Shutong and Gao, Chen and Wen, Junhao and Zhou, Wei and Luo, Qun and Chen, Peixuan and Li, Yong},
  journal={arXiv preprint arXiv:2402.13840},
  year={2024}
}

@article{liu2024llm,
  title={Llm-esr: Large language models enhancement for long-tailed sequential recommendation},
  author={Liu, Qidong and Wu, Xian and Wang, Yejing and Zhang, Zijian and Tian, Feng and Zheng, Yefeng and Zhao, Xiangyu},
  journal={Advances in Neural Information Processing Systems},
  volume={37},
  pages={26701--26727},
  year={2024}
}

@article{xu2024slmrec,
  title={SLMRec: Distilling large language models into small for sequential recommendation},
  author={Xu, Wujiang and Wu, Qitian and Liang, Zujie and Han, Jiaojiao and Ning, Xuying and Shi, Yunxiao and Lin, Wenfang and Zhang, Yongfeng},
  journal={arXiv preprint arXiv:2405.17890},
  year={2024}
}

@inproceedings{han2024lm,
  title={LM-Infinite: Zero-Shot Extreme Length Generalization for Large Language Models},
  author={Han, Chi and Wang, Qifan and Peng, Hao and Xiong, Wenhan and Chen, Yu and Ji, Heng and Wang, Sinong},
  booktitle={Proceedings of the 2024 Conference of the North American Chapter of the Association for Computational Linguistics: Human Language Technologies (Volume 1: Long Papers)},
  pages={3991--4008},
  year={2024}
}

@inproceedings{rafailov2024direct,
  title={Direct preference optimization: Your language model is secretly a reward model},
  author={Rafailov, Rafael and Sharma, Archit and Mitchell, Eric and Manning, Christopher D and Ermon, Stefano and Finn, Chelsea},
  booktitle={NeurIPS},
  year={2023}
}

@inproceedings{meng2024simpo,
   title={SimPO: Simple Preference Optimization with a Reference-Free Reward},
   author={Meng, Yu and Xia, Mengzhou and Chen, Danqi},
   booktitle={NeurIPS},
   year={2024}
}

@InProceedings{azar24aipo,
  title={A General Theoretical Paradigm to Understand Learning from Human Preferences},
  author={Gheshlaghi Azar, Mohammad and Daniel Guo, Zhaohan and Piot, Bilal and Munos, Remi and Rowland, Mark and Valko, Michal and Calandriello, Daniele},
  booktitle={AISTATS},
  year={2024},
}

@article{Lu2024EliminatingBL,
  title={Eliminating Biased Length Reliance of Direct Preference Optimization via Down-Sampled KL Divergence},
  author={Junru Lu and Jiazheng Li and Siyu An and Meng Zhao and Yulan He and Di Yin and Xing Sun},
  journal={arXiv preprint arXiv:2406.10957},  
  year={2024},
}

@article{Ethayarajh2024KTOMA,
  title={KTO: Model Alignment as Prospect Theoretic Optimization},
  author={Kawin Ethayarajh and Winnie Xu and Niklas Muennighoff and Dan Jurafsky and Douwe Kiela},
  year={2024},
  journal={arXiv preprint arXiv:2402.01306},  
}

@inproceedings{hong-etal-2024-orpo,
    title = "{ORPO}: Monolithic Preference Optimization without Reference Model",
    author = {Hong, Jiwoo  and
      Lee, Noah  and
      Thorne, James},
    booktitle = {EMNLP},
    year = "2024",
}

@inproceedings{10.1145/3637528.3671931,
author = {Kim, Sein and Kang, Hongseok and Choi, Seungyoon and Kim, Donghyun and Yang, Minchul and Park, Chanyoung},
title = {Large Language Models meet Collaborative Filtering: An Efficient All-round LLM-based Recommender System},
year = {2024},
isbn = {9798400704901},
publisher = {Association for Computing Machinery},
address = {New York, NY, USA},
url = {https://doi.org/10.1145/3637528.3671931},
doi = {10.1145/3637528.3671931},
pages = {1395–1406},
numpages = {12},
location = {Barcelona, Spain},
series = {KDD '24},
booktitle = {Proceedings of the 30th ACM SIGKDD Conference on Knowledge Discovery and Data Mining}
}

@inproceedings{10.1145/3626772.3657690,
author = {Liao, Jiayi and Li, Sihang and Yang, Zhengyi and Wu, Jiancan and Yuan, Yancheng and Wang, Xiang and He, Xiangnan},
title = {LLaRA: Large Language-Recommendation Assistant},
year = {2024},
isbn = {9798400704314},
publisher = {Association for Computing Machinery},
address = {New York, NY, USA},
url = {https://doi.org/10.1145/3626772.3657690},
doi = {10.1145/3626772.3657690},
pages = {1785–1795},
numpages = {11},
location = {Washington DC, USA},
series = {SIGIR '24},
booktitle = {Proceedings of the 47th International ACM SIGIR Conference on Research and Development in Information Retrieval}
}

@inproceedings{6c48a0f1d7d84077a16ce55105dc8ddc,
title = "Harnessing Large Language Models for Text-Rich Sequential Recommendation",
author = "Zhi Zheng and Chao, \{Wen Shuo\} and Zhaopeng Qiu and Hengshu Zhu and Hui Xiong",
year = "2024",
month = may,
day = "13",
doi = "10.1145/3589334.3645358",
series = "WWW 2024 - Proceedings of the ACM Web Conference",
publisher = "Association for Computing Machinery, Inc",
pages = "3207--3216",
booktitle = "WWW 2024 - Proceedings of the ACM Web Conference",
}

@inproceedings{10.1145/3746252.3761507,
author = {Zhu, Jie and Fan, Zhifang and Zhu, Xiaoxie and Jiang, Yuchen and Wang, Hangyu and Han, Xintian and Ding, Haoran and Wang, Xinmin and Zhao, Wenlin and Gong, Zhen and Yang, Huizhi and Chai, Zheng and Chen, Zhe and Zheng, Yuchao and Chen, Qiwei and Zhang, Feng and Zhou, Xun and Xu, Peng and Yang, Xiao and Wu, Di and Liu, Zuotao},
title = {RankMixer: Scaling Up Ranking Models in Industrial Recommenders},
year = {2025},
isbn = {9798400720406},
publisher = {Association for Computing Machinery},
address = {New York, NY, USA},
url = {https://doi.org/10.1145/3746252.3761507},
doi = {10.1145/3746252.3761507},
numpages = {8},
location = {Seoul, Republic of Korea},
series = {CIKM '25},
booktitle = {Proceedings of the 34th ACM International Conference on Information and Knowledge Management}
}

@misc{zhu2025csdm,
      title={Addressing Cold-start Problem in Click-Through Rate Prediction via Supervised Diffusion Modeling}, 
      author={Wenqiao Zhu and Lulu Wang and Jun Wu},
      year={2025}
}

@misc{zhu2025acl,
     title={SGDPO: Self-Guided Direct Preference Optimization for Language Model Alignment},
     author = {Wenqiao Zhu and Ji Liu and Lulu Wang and Jun Wu and Yulun Zhang},
     year={2025}
}

@inproceedings{10.1145/3511808.3557704,
author = {Zhu, Wenqiao and Xu, Yesheng and Huang, Xin and Min, Qiyang and Zhou, Xun},
title = {Spherical Graph Embedding for Item Retrieval in Recommendation System},
year = {2022},
pages = {4752–4756},
series = {CIKM22},
booktitle = {Proceedings of the 31st ACM International Conference on Information and Knowledge Management}
}
